\documentclass[journal,twocolumn]{IEEEtran}
\usepackage{amssymb}
\usepackage{amsfonts}

\hyphenation{op-tical net-works semi-conduc-tor}
\usepackage{savesym}
\usepackage{amsfonts,subfigure,multicol,color,verbatim,
graphicx,cite,epsfig,amssymb,amsmath,cases,bm,algorithm,
algorithmic,xcolor,multirow} \setcounter{secnumdepth}{4}
\usepackage{amsmath}
\usepackage{array}
\usepackage{enumerate}

\savesymbol{iint}

\setlength{\arraycolsep}{0.07 cm}

\restoresymbol{TXF}{iint}

\newtheorem{theorem}{\textbf{Theorem}}

\newtheorem{lemma}{\textbf{Lemma}}

\IEEEoverridecommandlockouts

\setlength{\topmargin}{-.25 in} \setlength{\oddsidemargin}{-.2 in}
\setlength{\evensidemargin}{-.2 in}
\setlength{\textheight}{9.10 in} 
\setlength{\textwidth}{7 in}

\setcounter{page}{1}
\begin{document}

\markboth{IEEE JOURNAL ON SELECTED AREAS IN COMMUNICATIONS, VOL. 34,
NO. X, Jan. 2016} {Zheng et al: Wireless-powered cooperative communications
\ldots}

\title{Wireless-Powered Cooperative Communications: Power-Splitting Relaying with Energy Accumulation}
\author{\IEEEauthorblockN{Zheng~Zhou, Mugen~Peng,~\IEEEmembership{Senior~Member,~IEEE}, Zhongyuan~Zhao, Wenbo~Wang, and Rick S. Blum,~\IEEEmembership{Fellow,~IEEE}}

\thanks{Manuscript received Apr. 15, 2015; revised Sep. 5, 2015; accepted Dec.
11, 2015. The work of Z. Zhou, and M. Peng was supported in part
by the National Natural Science Foundation of China under Grant 61361166005, the National High Technology Research and Development
Program of China under Grant 2014AA01A701, and the National Basic Research Program of China (973 Program) (Grant No. 2013CB336600). The work of Z. Zhao was supported by National Natural Science Foundation of China (Grant No. 61501045), and the Fundamental Research Funds for the Central Universities. The work of R. S. Blum was supported by the U. S. Army Research Laboratory and the U. S. Army Research Office under grant numbers W911NF-14-1-0245 and W911NF-14-1-0261, and by the National Science Foundation under Grant Numbers CMMI-1400164 and CCF-1442858. The corresponding author: M. Peng.

Zheng~Zhou (e-mail: {\tt nczhouzheng@gmail.com}), Mugen~Peng (e-mail: {\tt pmg@bupt.edu.cn}), Zhongyuan~Zhao (e-mail: {\tt zyzhao@bupt.edu.cn}) and Wenbo~Wang (e-mail: {\tt wbwang@bupt.edu.cn}) are with the Key Laboratory of Universal Wireless Communications for Ministry of Education, Beijing University of Posts and Telecommunications, China. Rick S. Blum (e-mail: {\tt rb0f@lehigh.edu}) is with the Dept. Electrical and Computer Engineering in Lehigh University, Bethlehem, PA, USA.}}

\maketitle
\vspace*{-1.8em}
\begin{abstract}

A harvest-use-store power splitting (PS) relaying strategy with distributed beamforming is proposed for wireless-powered multi-relay cooperative networks in this paper. Different from the conventional battery-free PS relaying strategy, harvested energy is prioritized to power information relaying while the remainder is accumulated and stored for future usage with the help of a battery in the proposed strategy, which supports an efficient utilization of harvested energy. However, PS affects throughput at subsequent time slots due to the battery operations including the charging and discharging. To this end, PS and battery operations are coupled with distributed beamforming. A throughput optimization problem to incorporate these coupled operations is formulated though it is intractable. To address the intractability of the optimization, a layered optimization method is proposed to achieve the optimal joint PS and battery operation design with non-causal channel state information (CSI), in which the PS and the battery operation can be analyzed in a decomposed manner. Then, a general case with causal CSI is considered, where the proposed layered optimization method is extended by utilizing the statistical properties of CSI. To reach a better tradeoff between performance and complexity, a greedy method that requires no information about subsequent time slots is proposed. Simulation results reveal the upper and lower bound on performance of the proposed strategy, which are reached by the layered optimization method with non-causal CSI and the greedy method, respectively. Moreover, the proposed strategy outperforms the conventional PS-based relaying without energy accumulation and time switching-based relaying strategy.

\end{abstract}

\vspace*{-0.5em}
\begin{keywords}

Wireless-powered communication, power splitting, harvest-use-store, channel state information.
\end{keywords}
\vspace*{-1.2em}
\section{Introduction}
\vspace*{-0.5em}
{\color{red}
The fifth generation ($5$G) communication networks are expected to support new emerging services with high network capacity, as well as a reduced delay and energy consumption. To achieve these requirements, the use of super-dense small cell deployments and centralized resource management, i.e. cloud radio access network, is becoming an appealing approach~\cite{2add1,2add2}. However, to fulfill the desired coverage, some wireless nodes in the $5$G network might need to be deployed in places lacking an external power supply. To this end, energy harvesting approaches that scavenge energy from the ambient environment are recognized as a key enabling technology for these self-sustainable nodes~\cite{2add4}. Meanwhile, cooperative relay communication is a promising approach to enlarge coverage and improve spectral efficiency. Therefore, enabling cooperative relay communications via energy harvesting is becoming a popular concept for green communication, which aims at decreasing power usage, while improving the transmission performance.}

A key concern of the energy harvesting enabled cooperative relay communication is the efficient utilization of harvested power, which is not steadily replenished as in traditional grid-aided communication networks. The issue of improving transmission performance via an efficient utilization of harvested power has been widely studied for conventional energy harvesting techniques, where natural resources, such as solar, wind etc. are used as energy sources~\cite{natu1,2add5}. However, the intermittent and unpredictable nature makes these sources difficult to exploit in certain environments. As an alternative to the conventional energy harvesting techniques, radio-frequency~(RF) energy harvesting techniques are believed to fully unleash the potential gains of energy harvesting, in which RF signals transmitted from the source node can be used as energy sources for cooperative nodes. Moreover, it has been illustrated in~\cite{2-1} that wireless-powered cooperative relay communications can be realized within a boundary distance, which is determined by both the available transferred energy and the minimum energy requirements of the harvesting devices. As a result, the wireless-powered cooperative relay communication is formed and can be regarded as a promising solution to highly energy-efficient networking. Note that, due to the dual-purpose of RF signals, i.e., wireless power transfer~(WPT) and wireless information transfer~(WIT), fundamental changes to the designs of green communication networks are entailed.
\vspace*{-0.5em}
\subsection{Related Works}

As concluded in~\cite{htadd2}, energy harvesting receiving architectures and power management models are two essential units to realize wireless-powered cooperative relay communications. The energy harvesting unit is for energy collection, and there are mainly two types of energy harvesting receiving architectures in the literature. In particular, the first type is based on power splitting~(PS) technique, which splits the received RF signal into two different power streams for separate WPT and WIT, and the other is based on time switching technique, where the received signal at one time slot is used for either WPT or WIT~\cite{swipt1}. Note that, the signal received at one time slot is used for both information processing and power transfer in the PS technique, which is not allowed in the time switching technique~\cite{delay}. Therefore, the time switching technique is suboptimal in terms of efficiently using the available signal power~\cite{eff}, and the PS technique is more suitable for applications with critical delay constraints~\cite{delay}. Besides, the power management unit aims at utilizing the harvested power effectively. Most of the research works conducted so far assumes either harvest-use or harvest-store-use model. In the former model, the harvested energy is directly used and energy accumulation is not allowed due to lack of storage units, while in the later model, the harvested energy is first accumulated and stored with the help of a battery and then adaptively utilized in information transmission~\cite{htadd2}.

The incorporation of harvest-use model and PS-based relaying has been studied in~\cite{swipt3,swipt4,swipt5}, where harvested energy through PS is used up at each information relaying. In particular, in~\cite{swipt3}, a harvest-use PS-based relaying strategy was realized in two-way amplify-and-forward relaying systems, and the outage probability and ergodic capacity were analyzed. In~\cite{swipt5}, a multiple relay system was considered, where several randomly located energy harvesting relays help the transmission between a source-destination pair. The proposed strategy was shown to achieve the same diversity gain as the case with conventional self-powered relays. The multiple antenna configurations were considered in~\cite{swipt4}, where a sophisticated relaying strategy that jointly utilizes harvest-use PS and antenna selection was designed and optimized to improve the achievable rate. Besides, the harvest-use time switching-based relaying was studied in multi-tier uplink cellular networks in~\cite{htadd}, where each user transmits only when the harvested energy at one time slot is sufficient for the designed power control scheme, and no user can save the extra harvested energy for the next time slot. Though the harvest-use model is easy to implement, it would perform better if energy accumulation is allowed to store a part of the harvested power for future usage~\cite{swipt5}, which is known as the harvest-store-use model. This model has been incorporated with time switching-based relaying in three-node relaying networks in~\cite{ht3}, where data relaying was realized when sufficient power was collected through time switching technique, and the remaining power was stored for future usage. The outage probability of the proposed strategy was studied. Besides, a similar strategy was discussed for user equipment relay in device-to-device communications in~\cite{ht4}.

Note that, it is still an open problem to allow energy accumulation in PS-based relaying strategy. Furthermore, the harvest-store-use mode is well justified if the battery has perfect efficiency. However, nearly all practical batteries suffer a storage loss to varying extents, ranging from 10\% to 30\%~\cite{sto}. From the perspective of energy efficiency, a new harvest-use-store model has been proposed in~\cite{ht5}, where the harvest energy is prioritized for use in data transmission while its balance/debt is stored in or extracted from the battery, which avoids unnecessary energy loss in storing. This new model has been combined with the conventional energy harvesting techniques in~\cite{ht5}.
\vspace*{-0.5em}
\subsection{Motivations and Contributions}

To realize an efficient utilization of harvested energy and improve spectral efficiency, energy accumulation is realized in PS-based relaying in this paper. In particular, the combining of PS based-relaying and harvest-use-store model to improve throughput performance is considered. Different from the previous works with time switching-based case in~\cite{ht3,ht4}, information transfer and battery operations including the charging and discharging happen at the same time slot in the PS-based case. Moreover, the enhanced PS technique needs to be designed to support the battery operations. To address these challenging issues is the main concern of this paper, and the main contributions can be summarized as follows:

\begin{itemize}
\item To realize an efficient utilization of harvested energy with the help of a battery, a harvest-use-store PS relaying strategy with distributed beamforming is proposed for the wireless-powered multiple-relay scenario. Specifically, each relay obtains both information and power from the received signals transmitted by the source node via power splitting. Subsequently, with the help of a battery, the harvested power is used to amplify-and-forward the information to the destination through distributed beamforming with adaptive power allocation.
\item To reveal a theoretical bound of the proposed strategy, a throughput maximization problem is formulated and solved with an ideal non-causal channel state information~(CSI) assumption. Though the formulated optimization problem is intractable looking, a layered optimization method that derives the optimal solution is developed. In particular, the joint PS and battery operation design in the proposed strategy is decomposed in two layers, such that the original optimization problem is transformed to a dynamic programming problem with a subproblem requiring optimization embedded in it. Then, to address the non-convex embedded subproblem, an alternating-Dinkelbach optimization is proposed to transfer the subproblem to a convex form. Further, the dynamic programming problem can be solved by using backward induction.
\item To study the throughput performance of the proposed harvest-use-store PS relaying strategy in a general scenario with causal CSI, the proposed layered optimization method is extended by utilizing the statistical properties of the CSI via incorporating a finite-state Markov Chain model. Further, a greedy method is proposed, which requires no information about subsequent time slots. It's shown that the adopted greedy method will use up the harvested energy at each transmission. Simulation results reveal that the advantages of the proposed strategy over the conventional battery-free ones depend on the utilization of information about subsequent time slots.
\end{itemize}

The remainder of this paper is outlined as follows. In Section~II, the system model will be presented and the proposed strategy will be described. Following that, the optimal joint PS and battery operation design will be developed with the non-causal CSI assumption in Section~III. Then, the causal CSI case will be analyzed in Section~IV, followed by numerical results in Section~V and conclusions in the final section.

\section{System Model and Protocol Description}

The system under consideration is a wireless-powered cooperative relay network consisting of a source~($S$), a destination~($D$), and a set of $K$ energy harvesting relays~($R_k, k=1,2,\cdots,K$), where each relay is equipped with a battery. No direct link exists between $S$ and $D$. Considering a half-duplex relay model and a total of $T$ equal length time slots. A harvest-use-store PS strategy is proposed and implemented at each time slot, each of which consists of two equal length phases. In the first phase, each relay obtains both information and power from its received RF signal transmitted by $S$ via power splitting. In the second phase, with the help of battery, the harvested energy is used to amplify-and-forward the information to $D$ through distributed beamforming with adaptive power allocation.
\vspace*{-0.5em}
\subsection{Simultaneous Information and Power Transfer via Power Splitting}

In the first phase of time slot $t$, the received RF signal at the ${R_k}$ can be expressed by
\begin{equation}\label{eq:1}
{y_k}\left( t \right) = {h_k}\left( t \right)\sqrt {{p_S}\left( t \right)} {x_S}\left( t \right) + {{\tilde z}^a_k}\left( t \right),
\end{equation}
where $t$ denotes the time index, and $k$ is the index for relays, ${h_k}(t)$ denotes the complex link gain between $S$ and ${R_k}$, ${x_S}(t)$ denotes the signal transmitted from $S$ during time slot $t$ with normalized power $E({\left| {x_S}(t) \right|^2}) = 1$, and ${p_S}(t)$ is the transmit power at $S$ with ${p_S}(t) \le P$. Finally, ${{\tilde z}^a_k}(t)$ is the additive zero mean variance $\sigma _a^2$ white Gaussian noise (AWGN) in the received signal~\cite{swipt3}, thus ${{\tilde z}^a_k}(t)\sim CN\left( {0},\sigma _a^2\right)$.
\begin{figure}[!htb]
\centering\vspace*{-1em}
\includegraphics[width = 3.4 in]{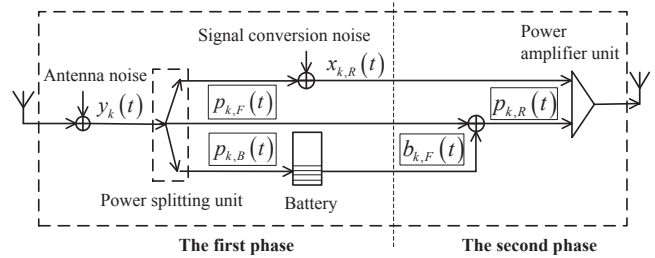}
\caption{Processing at the energy harvesting relay $R_k$ during the $t$-th time slot.}\label{sys2}
\end{figure}

The received RF signal ${y_k}\left( t \right)$ is then divided into three different power streams through PS, which is depicted in Fig.~\ref{sys2}. Specifically, the first part of the received signal, which uses a PS ratio ${\lambda_{k,I}}(t) \in [0,1]$, is down-converted to baseband and will be amplified and transmitted in the next phase. The sampled baseband signal is given by
\begin{equation}\label{eq:2}
x_{k,R}(t) = \sqrt{{\lambda_{k,I}}(t)}\left({h_k}\left( t \right)\sqrt {{p_S}\left( t \right)} {x_S}\left( t \right) + {{z}^a_k}\left( t \right)\right) + {z^b_k}(t),
\end{equation}
where ${z^a_k}(t)\sim CN\left( {0},\sigma _a^2\right)$ is the baseband equivalent noise of the pass band noise ${{\tilde z}^a_k}(t)$, and ${z^b_k}(t)\sim CN\left( {0},\sigma _b^2\right)$ is the sampled AWGN introduced by RF band to baseband signal conversion~\cite{swipt3}. The second part of signal power, which uses a PS ratio $\lambda_{k,F}(t) \in [0,1 - {\lambda_{k,I}}(t)]$, will be used to power the amplify-and-forward process in the next phase. This part of power is described by
\begin{equation}\label{eq:2add}
{p_{k,F}}(t) = \eta_1 {\lambda_{k,F}}(t){\left( {{{\left| {{h_k}\left( t \right)} \right|}^2}{p_S}\left( t \right) + \sigma _a^2} \right)},
\end{equation}
where $\eta_1 \in (0,1]$ is a constant denoting the energy conversion efficiency from signal power to DC power. The last part of signal power, which uses a PS ratio $\lambda_{k,B}(t) = \left(1 - {\lambda_{k,I}}(t) - {\lambda_{k,F}}(t)\right)$, will be used to charge the battery. This part of power is denoted by
\begin{equation}\label{eq:2add2}
{p_{k,B}}'(t) = \eta_1 {\lambda_{k,B}}\left( t \right){\left( {{{\left| {{h_k}\left( t \right)} \right|}^2}{p_S}\left( t \right) + \sigma _a^2} \right)}.
\end{equation}
A finite and discrete battery model is adopted in this paper, where the battery is of size $B_{max} = \alpha P,(\alpha > 0)$ and is discretized into $L+1$ energy levels $\Gamma = \{0,B_{max}/L,\cdots,B_{max}\}$. Note that, this model can closely approximate a continuous battery model when the number of energy levels is sufficiently large~\cite{batt3}. Due to this finite and discrete battery model, the charged power at $R_k$ during time slot $t$ is given by
\begin{equation}\label{eq:11}
\begin{array}{l}
{p_{k,B}}(t) = min\{B_{max} - {B_k}\left( t \right), \frac{{n^*}(t)}{L}B_{max}\},\\
{n^*}(t) = \arg \max \limits_{n(t) \in \{0,\cdots,L\}} \{{n}(t) : \frac{{n}(t)}{L}B_{max} \le \eta_2{p_{k,B}}'(t) \},\\
\end{array}
\end{equation}
where $B_k(t)\in \Gamma$ denotes the energy level at the beginning of time slot $t$, $\eta_2 \in (0,1]$ is the storage efficiency describing the power loss in battery charging~\cite{htadd2}, the first equation is due to the finite property of the battery, and the second one is due to the discrete property of the battery.
\vspace*{-0.5em}
\subsection{Distributed Beamforming with Adaptive Power Allocation}

In the second phase, the destination $D$ receives a signal transmitted from all relays as
\begin{equation}\label{eq:5}
y\left( t \right) = \sum\limits_{k = 1}^K {\beta _k}\left( t \right){{g_k}\left( t \right)e^{j{\theta_k}\left( t \right)}x_{k,R}(t)}  + z\left( t \right),
\end{equation}
where ${g_k}(t)$ denotes the link gain for ${R_k}$ to $D$ and ${z}(t) \sim CN\left( {0},\sigma _D^2\right)$ is the AWGN at $D$, $e^{j{\theta_k}\left( t \right)}$ is derived from the distributed beamforming design~\cite{dis}, where
\begin{equation}\label{eq:8add}
{\theta_k}(t) = -{\left( {\arg {h_k}\left( t \right) + \arg {g_k}\left( t \right)} \right)},
\end{equation}
which cancels the phases of the two links between $R_k$ and $S,D$ respectively, ${\beta_k}(t)$ is the amplification gain depicted by
\begin{equation}\label{eq:6}
{\beta_k}(t) = \sqrt {\frac{p_{k,R}(t)}{{{\lambda_{k,I}}(t) {\left( {{{\left| {{h_k}\left( t \right)} \right|}^2}{p_S}\left( t \right) + \sigma _a^2} \right)}+\sigma ^2_b}}},
\end{equation}
where ${p_{k,R}}(t)$ is the transmit power at $R_k$, this power is composed of two different parts:

\begin{equation}\label{eq:7}
{p_{k,R}}(t) = {p_{k,F}}(t) + {b_{k,F}}\left( t \right),
\end{equation}
where $p_{k,F}$ is provided by the PS operation in~\eqref{eq:2add}, and ${b_{k,F}}(t) \in \Gamma$ is provided by the battery discharging. As a result, the energy level at $R_k$ at the end of time slot $t$~(which is also the beginning of time slot $(t+1)$) is
\begin{equation}\label{eq:8}
{B_k}(t+1) = {B_k}\left( t \right) + {p_{k,B}}(t) - {b_{k,F}}\left( t \right).
\end{equation}

Note that, any power consumption at the relays for purposes other than for transmission is assumed negligible~\cite{ht3,swipt3}. Due to the fact that the antenna noise ${z^a_k}\left( t \right)$ has a negligible impact on both the information processing and energy harvesting~\cite{noi}, it is ignored in the following analysis by setting $\sigma _a^2 = 0$. Substituting~\eqref{eq:1} and~\eqref{eq:2} into~\eqref{eq:5}, the received signal-to-noise-ratio (SNR) at $D$ at time slot $t$ can be expressed as
\begin{equation}\label{eq:9}
{SNR}(t) = \frac{{{{{p_S}\left( t \right)\left( {\sum\limits_{k = 1}^K { {{\beta_k}(t){{\left| {{h_k}\left( t \right){g_k}\left( t \right)} \right|}}\sqrt{\lambda _{k,I}\left( t \right)}} } } \right)}^2}}}{{\sum\limits_{k = 1}^K {{{\beta_k}(t)}^2{{\left| {{g_k}\left( t \right)} \right|}^2}\sigma _b^2}  + \sigma _D^2}}.
\end{equation}

The overall system throughput after $T$ transmissions is described by
\begin{equation}\label{eq:10}
R_{total} = \frac{1}{2}\sum\limits_{t = 1}^T {\log \left( {1 + SNR\left( t \right)} \right)},
\end{equation}
where the factor $1/2$ is due to the half-duplex relaying mode.

According to~\eqref{eq:6},~\eqref{eq:7},~\eqref{eq:9}, and~\eqref{eq:10}, four key factors should be jointly considered to optimize the overall throughput, including information transfer~(i.e., designing $\lambda_{k,I}(t),\forall k, \forall t$), power transfer~(i.e., designing $\lambda_{k,F}(t),\forall k, \forall t$), battery charging~(i.e., designing $\lambda_{k,B}(t),\forall k, \forall t$), and battery discharging~(i.e., designing $b_{k,F}(t),\forall k, \forall t$). The first three factors are mainly determined by the PS, while the last one is only determined by the battery operation. Moreover, the harvested energy~(i.e., ${p_{k,B}}(t),\forall k, \forall t$) obtained from the PS affects the throughput during the subsequent time slots via battery charging and discharging. To this end, the PS and battery operations are coupled with distributed beamforming, which makes the throughput maximization problem intractable.

\section{Optimal Joint Power Splitting and Battery Operation Design}

In this section, the proposed harvest-use-store PS relaying strategy is optimized toward the goal of throughput maximization with the CSI of $T$ time slots known before transmission, which is commonly known as non-causal CSI assumption. To solve the resulting intractable looking problem, the throughput maximization problem is described by a dynamic programming problem equivalently. However, such a solution is computationally prohibitive to implement since the number of possible joint PS and battery operations to be evaluated is infinite. Therefore, the joint PS and battery operations are decomposed equivalently, and the resulting embedded problem and overall problem are formulated. Moreover, the optimal solutions to the embedded and overall problems can be approached through iterative algorithms and backward induction. In this way, the optimal solution toward throughput maximization is approached.
\vspace*{-0.5em}
\subsection{Decoupling of the Throughput Maximization Problem}

We first simplify the formulation of the throughput maximization problem by utilizing insights about the optimization variables. According to~\eqref{eq:9} and~\eqref{eq:10}, the throughput $R_{total}$ is a strictly increasing function of $p_S(t),\forall t$, when other variables are fixed. Since $p_S(t)\leq P $, the optimum choice is
\begin{equation}\label{eq:12add}
p_S(t) = P. \; \forall t
\end{equation}

Next, the optimized $\lambda_{k,B}(t)$ in~\eqref{eq:2add2} and~\eqref{eq:11} should satisfy
\begin{equation}\label{eq:11add}
{p_{k,B}}'(t) = {p_{k,B}}(t) = \frac{{n^*}(t)}{L}B_{max}, \; \forall t,\forall k
\end{equation}

Note that the throughput $R_{total}$ can generally be further improved by reassigning the PS ratios. $\lambda_{k,B}(t)$ is decreased to meet~\eqref{eq:11add}, $\lambda_{k,F}(t)$ remains the same, and $\lambda_{k,I}(t)$ is increased, which can be seen according to~\eqref{eq:2add},~\eqref{eq:6},~\eqref{eq:7}, and~\eqref{eq:9}. The result in~\eqref{eq:11add} implies that no split signal power for battery charging would be discarded due to the finite and discrete property of the battery, because the discarded part could have been assigned for information processing and improve the transmission performance via the more proper PS ratios.

To make the analysis more concise, the energy level variation is introduced to describe the effects of battery charging and discharging, which is denoted by ${v_k}\left( t \right)$ at $R_k$ during time slot $t$, such that
\begin{equation}\label{eq:12}
\begin{array}{l}
{v_k}\left( t \right) = {B_k}(t) - {B_k}(t + 1)
  \\\mathop
= \limits^{\left( a \right)} {b_{k,F}}\left( t \right) - \eta_1 \eta_2 \lambda_{k,B}\left( t \right)P{\left| {{h_k}\left( t \right)} \right|^2},
\end{array}
\end{equation}
where the equation $(a)$ is derived from~\eqref{eq:2add2},~\eqref{eq:8} and~\eqref{eq:11add}. To avoid unnecessary power loss caused by battery charging~(due to $\eta_2$), it's assumed that at each battery, the charging and discharging operations would not occur during the same time slot. In particular, if the energy level at $R_k$ decreases at time slot $t$~(i.e., $v_k(t) \ge 0$), the battery is not charged in the first phase~(i.e., ${\lambda_{k,B}}\left( t \right) = 0$), and discharged in the second one, which implies that the stored battery power is utilized to back up the transmit power. On the other hand, if the energy level increases~(i.e., $v_k(t) \le 0$), the battery is charged in the first phase, and not discharged in the second one~(i.e., ${b _{k,F}}\left( t \right) = 0$), which implies that a part of the harvested power is stored for future usage. Based on~\eqref{eq:12}, these relationships can be described by

\begin{equation}\label{eq:14}
\begin{array}{l}
{b_{k,F}}\left( t \right) = \max \left\{ {0,{{v_k}\left( t \right)}} \right\},\\
\lambda_{k,B}(t) =  - \min \left\{ {0,\frac{{{v_k}\left( t \right)}}{{\eta_1 \eta_2 P{{\left| {h_k\left( t \right)} \right|}^2}}}} \right\}.
\end{array}
\end{equation}

Substituting~\eqref{eq:2add},~\eqref{eq:12add} and~\eqref{eq:14} into~\eqref{eq:7}, the transmit power for $R_k$ at time slot $t$ is rewritten as
\begin{equation}\label{eq:14add}
\begin{array}{l}
{p_{k,R}}\left( t \right) = {\eta _1}\left(1 - \lambda_{k,I}\left( t \right)\right) P{{\left| {{h_k}\left( t \right)} \right|}^2} \\
\quad \quad \quad \quad \quad \quad + \min \left\{ {0,\frac{{{v_k}\left( t \right)}}{{{\eta _2}}}} \right\} + \max \left\{ {0,{v_k}\left( t \right)} \right\}.
\end{array}
\end{equation}

As a result, the throughput maximization problem can be formulated as
\begin{equation}\label{eq:15}
\begin{array}{l}
\left( {P1} \right):\mathop {\max }\limits_{{{\bold{I}}(t),\bold{v}}(t),\forall t} {R_{total}}\\
s.t. C1: 0 \le {\lambda _{k,I}}\left( t \right) \le 1 + \min \left\{ {0,\frac{{{v_k}\left( t \right)}}{{{\eta _1}{\eta _2}P{{\left| {{h_k}\left( t \right)} \right|}^2}}}} \right\},\;\forall k,\forall t\\
C2: {v_k}\left( t \right) = n(t)\frac{{{B_{max}}}}{L},n(t) \in \left\{ { - L, \cdots ,0, \cdots ,L} \right\},\forall k,\forall t \\
C3: {B_k}\left( 1 \right) - \sum\limits_{i = 1}^{t - 1} {{v_k}\left( i \right)} - B_{max} \le {v_k}\left( t \right) \\
\quad \quad \quad \quad \quad \quad \le {B_k}\left( 1 \right) - \sum\limits_{i = 1}^{t - 1} {{v_k}\left( i \right)},\;\forall k,\forall t
\end{array}
\end{equation}
where ${\bold{I}}(t)=(\lambda_{1,I}(t),\cdots,\lambda_{K,I}(t))$ describes the information transfer design at time slot $t$ and ${\bold{v}}(t)=({v_1}\left( {t} \right),\cdots,{v_K}\left( {t} \right))$ denotes the energy level variation design. The first constraint $C1$ is derived from the definitions of PS rations and the constraint in~\eqref{eq:14}, which implies that the range of allowable information transfer designs changes with the energy level variation designs. The second constraint $C2$ is derived from the definition of $B_k(t)$ and $v_k(t)$, which is due to the finite and discrete property of the battery. The third constraint $C3$ is derived from~\eqref{eq:12}, which implies that design of the energy level variation is coupled over time. Since allowable values for ${v_k}(t),\forall k,\forall t$ are discrete, Problem (P1) is a mixed integer optimization problem and non-convex.

To handle this time coupled joint PS and battery operation design in Problem (P1), the following components are defined to transform Problem (P1) into a dynamic programming problem equivalently.
\begin{enumerate}[1)]
\item battery state: The battery state describes the energy levels at all batteries at the beginning of each transmission, which is described by $\bold{S}(t) = \{{B_1}\left({t} \right),\cdots,{B_K}\left( {t} \right)\}$ at time slot $t$. Without loss of generality, the initial battery state is set as $\bold{S}(1) = \{0,\cdots,0\}$.
\item decision: The decision denotes the energy level variation designs and the information transfer designs at all relays, which is denoted by~$\bold{D}(t) = \left({\bold{v}}(t), {\bold{I}}(t)\right)$ at time slot $t$.
\item state evolution: The state evolution describes the change of battery states over two adjacent time slots, which can be denoted by
    \begin{equation}\label{eq:add3}
    \bold{S}(t) - \bold{S}(t+1) = \bold{v}\left( {t} \right),
    \end{equation}
    at time slot $t$.~\eqref{eq:add3} can be interpreted as ${B_k}\left({t} \right) - {B_{k}}\left({t+1} \right) = v_k(t), \forall k$, which is derived from~\eqref{eq:12}.
\end{enumerate}

Note that, the energy level variation design results in the state evolution. Moreover, the range of allowable energy level variation designs is determined by battery states
\begin{equation}\label{eq:15add}
{B_k}\left( t \right)  - B_{max} \le {v_k}\left( t \right) \le {B_k}\left( t \right),\;\forall k
\end{equation}
which is derived from~\eqref{eq:12} and the constraint $C3$ in Problem (P$1$).

\begin{enumerate}
\setcounter{enumi}{3}
\item payoff: The payoff is the throughput at each time slot. According to~\eqref{eq:9}, the payoff is determined by the battery state ${\bold{S}}(t)$, the decision ${\bold{D}}(t)$ and the time slot $t$, which can be described by
    \begin{equation}\label{eq:31}
    R({\bold{S}}(t),\bold{D}(t),t) =  \frac{1}{2} { \log \left( {1 + SNR\left( t \right)} \right)}.
    \end{equation}
\item summation of payoffs: The summation of payoffs starting from time slot $T$ and summing backward to the first time slot~(to facilitate backward induction) is defined as
  \begin{equation}\label{eq:32}
  \begin{array}{l}
  U({\bold{S}}(t),\bold{D}(t),t) = R({\bold{S}}(t),\bold{D}(t),t) \\
  \quad \quad \quad + U({\bf{S}}(t+1),{{\bf{{D}}}}(t+1),(t+1)), \; \forall (t \ne T)\\
  U({\bf{S}}(T),{\bf{D}}(T),T) = {R}({\bf{S}}(T),{\bf{D}}(T),T).
  \end{array}
  \end{equation}

\end{enumerate}

In this way, the throughput in Problem~(P1) is the summation of payoffs through $T$ time slots~(i.e., $R_{total} = U\left({\bf{S}}\left( 1 \right), {{\bf{{D}}}}(1) , 1 \right)$).
Based on equation~\eqref{eq:32} and Bellman equation~\cite{dyn}, the optimality equations to solve Problem (P1) can be written as
\begin{equation}\label{eq:addop1}
  \begin{array}{l}
  U^*({\bold{S}}(t),t) = \mathop{\max}\limits_{{\bf{v}}(t), {\bf{I}}(t)}  {R}({\bold{S}}(t),{\bf{v}}(t), {\bf{I}}(t),t) \\
  \quad \quad \quad + U^*({\bf{S}}(t+1),(t+1)), \; \forall t \ne T\\
  s.t. C1, C2, C3.
  \end{array}
\end{equation}
\begin{equation}\label{eq:addop2}
  \begin{array}{l}
  U^*({\bf{S}}(T),T) = \mathop{\max}\limits_{{\bf{v}}(T), {\bf{I}}(T)} {R}({\bf{S}}(T),{\bf{v}}(T),{\bf{I}}(T),T),\\
  s.t. C1, C2, C3,
  \end{array}
\end{equation}
as in many typical dynamic programming problems, we can work on one time slot at a time, starting from the last one~(i.e., $U^*({\bf{S}}(T),T)$) and working backward~(i.e., until $U^*({\bf{S}}(1),1)$), using what is called backward induction~\cite{dyn}. As a result, the optimal set of decisions can be found by the backward induction algorithm which essentially evaluates all possible sets of decisions and eventually picks the best\footnote{The Viterbi algorithm can be viewed as dynamic programming}.
\vspace*{-0.5em}
\subsection{Decomposing of the Joint PS and Battery Operation}

Though the optimal solution to Problem (P1) can be obtained through dynamic programming, such a solution is computationally prohibitive to implement because the PS ratios ${\bf{I}}(t)$ in each decision are continuous variables, which makes the number of possible decisions infinite. To overcome this computationally prohibitive issue, all possible decisions are evaluated in a decomposed manner, and thus the optimization problems in~\eqref{eq:addop1} and~\eqref{eq:addop2} are solved through a two step procedure. {\color{red}The basic idea is that we optimize $R({\bold{S}}(t),\bold{v}(t),\bold{I}(t),t)$ over $I(t)$ for each of the possible values of $v(t)$ in (23) and (24), where these optimized results are denoted by $\bold{I}^*({\bold{v}}(t))$. Since we have the optimum $\bold{I}^*({\bold{v}}(t))$ for each of the possible values of ${\bold{v}}(t)$, we can use this to eliminate $\bold{I}(t)$ in (23) and (24).} Correspondingly, the formulated subproblem in {\color{red}the first step} is given by
\begin{equation}\label{eq:add33}
\begin{array}{l}
  \mathop{\max}\limits_{{\bf{I}}(t)} R({\bold{S}}(t),\bold{v}(t),\bold{I}(t),t) \\
  = \mathop {\max }\limits_{\lambda _{1,I}\left( t \right),\cdots,\lambda _{K,I}\left( t \right)} \; \frac{1}{2} { \log \left( {1 + SNR\left( t \right)} \right)},\\
s.t. \quad C1.
\end{array}
\end{equation}
which is derived from~\eqref{eq:addop1},~\eqref{eq:addop2} and~\eqref{eq:31}, and is called the embedded problem in the following. Note that, the embedded problem needs to be solved for each possible $\bold{v}(t)$ with a certain $({\bold{S}}(t),t)$.

In the second step, by substituting the optimal solutions $\bold{I}^*({\bold{v}}(t))$ to~\eqref{eq:addop1} and~\eqref{eq:addop2}, this dynamic programming problem can be simplified to
\begin{equation}\label{eq:33}
\begin{array}{l}
  {U^*}\left( {{\bf{S}}\left( t \right)},t \right) = \mathop {\max}\limits_{{\bf{v}}(t)} \left\{ R({\bold{S}}(t),\bold{v}(t),\bold{I}^*({\bf{v}}(t)),t) \right.\\
  \quad \quad \quad \quad \quad \quad \left.  + U^*\left( {{\bf{S}}\left( {t + 1} \right)},(t+1) \right) \right\}, \forall (t \ne T)\\
  s.t. \quad C2, C3.
\end{array}
\end{equation}

\begin{equation}\label{eq:addb2}
\begin{array}{l}
  {U^*}\left( {{\bf{S}}\left( T \right)},T \right) = \mathop {\max}\limits_{{\bf{v}}(t)} {R}({\bf{S}}(T),{\bf{v}}(T),{\bold{I}}^*({\bf{v}}(T)),T),\\
  s.t. \quad C2, C3.
  \end{array}
\end{equation}
Note that, since each possible ${\bold{v}}(t)$ are evaluated with the corresponding $\bold{I}^*({\bold{v}}(t))$, the number of decisions to be evaluated is $(L+1)^K$ for each state ${\bf{S}}(t)$ at each time slot $t$. To find the optimal set of decisions in~\eqref{eq:33} and~\eqref{eq:addb2} is called the overall problem in the following. In this way, the joint PS and battery operation is decomposed, and Problem (P1) is transformed to a dynamic programming problem~(overall problem) and an subproblem requiring optimization~(embedded problem).
\vspace*{-0.5em}
\subsection{Optimizing of the Embedded Problem and the Overall Problem}

We first solve the embedded problem in~\eqref{eq:add33} with each allowable battery operation ${\bold{v}}(t)$. Since the objective function in~\eqref{eq:add33} is monotonically increasing in terms of $SNR(t)$, this payoff maximization problem is clearly equivalent to the SNR maximization problem~\cite{dis}, which is described by
\begin{equation}\label{eq:add333}
\begin{array}{l}
\mathop {\max }\limits_{\lambda _{1,I}\left( t \right),\cdots,\lambda _{K,I}\left( t \right)} \; SNR\left( t \right) \\
s.t. \quad 0 \le \lambda_{k,I}\left( t \right) \le 1 + \min \left\{ {0,\frac{{{v_k}\left( t \right)}}{{\eta_1 \eta_2 P{{\left| {h_k\left( t \right)} \right|}^2}}}} \right\}. \; \forall \; k
\end{array}
\end{equation}

By substituting~\eqref{eq:6}~\eqref{eq:9} and~\eqref{eq:14add} in~\eqref{eq:add333} and omitting $\lambda_{k,I}(t),h_k(t),g_k(t)$ and $v_k(t)$'s dependence on $t$ since $t$ is constant during the optimization, this SNR maximization problem can be rewritten as
\begin{equation}\label{eq:43}
\begin{array}{l}
(P2): \mathop {\max }\limits_{ {x_1, \cdots ,x_K} } J({x_1, \cdots ,x_K}) \\
\quad \quad \quad\quad \quad \quad= \frac{{{{\left( {\sum\limits_{k = 1}^K {\sqrt {\eta_1 {{\left| {{g_k}} \right|}^2}\left( {{a_k} - {x_k}} \right)\left( {1 - \frac{{\sigma _b^2}}{{{x_k}}}} \right)} } } \right)}^2}}}{{\sum\limits_{k = 1}^K {\eta_1 {{\left| {{g_k}} \right|}^2}\left( {{a_k} - {x_k}} \right)\frac{{\sigma _b^2}}{{{x_k}}}}  + \sigma _D^2}},\\
s.t. \sigma _b^2 \le {x_k} \le \left( {1 + \min \left\{ {0,\frac{{{v_k}}}{{\eta_1 \eta_2 P{{\left| {{h_k}} \right|}^2}}}} \right\}} \right)P|{h_k}{|^2} + \sigma _b^2, \forall k
\end{array}
\end{equation}
where ${a_k} =  {P|{h_k}{|^2} + \max \left\{ {0,\frac{{{v_k}}}{{{\eta _1}}}} \right\} + \min \left\{ {0,\frac{{{v_k}}}{{{\eta _1}{\eta _2}}}} \right\}} + \sigma _b^2$, and ${x_k} = \lambda_{k,I} P|{h_k}{|^2} + \sigma _b^2$.

A key challenge in solving Problem (P$2$) is the lack of convexity in the problem formulation, thus alternating optimization~\cite{alt} and Dinkelbach algorithm~\cite{fractional} are utilized jointly to transfer Problem (P$2$) into a convex form, which is called alternating-Dinkelbach for short in this paper. First, the $K$ variables~(i.e., $x_1,\cdots,x_K$) in Problem (P$2$) are handled by alternating optimization, where one variable is updated at a time while fixing the others. Specifically, the decomposed subproblem to optimize $x_j$ can be given by
\begin{equation}\label{eq:46}
\begin{array}{l}
(P3):\mathop {\max }\limits_{{x_j}} \quad \mathop \frac{F_1(x_j)}{F_2(x_j)}\\
s.t. \; \sigma _b^2 \le {x_j} \le \left( {1 + \min \left\{ {0,\frac{{{v_j}}}{{{\eta _1}{\eta _2}P{{\left| {{h_j}} \right|}^2}}}} \right\}} \right)P|{h_j}{|^2} + \sigma _b^2,
\end{array}
\end{equation}
where $F_1({x_j}) = \left( {\sqrt {{\eta _1}{{\left| {{g_j}} \right|}^2}\left( {{a_j} - {x_j}} \right)\left( {1 - \frac{{\sigma _b^2}}{{{x_j}}}} \right)} } \right.\\
{\left. { + \sum\limits_{k \ne j} {\sqrt {{\eta _1}{{\left| {{g_k}} \right|}^2}\left( {{a_k} - {x_k}} \right)\left( {1 - \frac{{\sigma _b^2}}{{{x_k}}}} \right)} } } \right)^2} \ge 0$, $F_2(x_j) = {{{\eta _1}{{\left| {{g_j}} \right|}^2}\left( {{a_j} - {x_j}} \right)\frac{{\sigma _b^2}}{{{x_j}}} + \sum\limits_{k \ne j} {{\eta _1}{{\left| {{g_k}} \right|}^2}\left( {{a_k} - {x_k}} \right)\frac{{\sigma _b^2}}{{{x_k}}}}  + \sigma _D^2}} >0$. Then the objective function in Problem (P$3$) is transferred from the fractional form to the subtractive form via Dinkelbach algorithm~\cite{fractional}. In particular, defining a parameter $q$ as
\begin{equation}\label{eq:add43}
q = \mathop \frac{F_1(x_j)}{F_2(x_j)},
\end{equation}
we formulate a subtractive form optimization problem with a given parameter $q$ as
\begin{equation}\label{eq:45}
\begin{array}{l}
(P4): \; F(q) = \mathop {\max }\limits_{{x_j}} \; F_1(x_j) - q F_2(x_j)\\
s.t.\quad \sigma _b^2 \le {x_k} \le \left( {1 + \min \left\{ {0,\frac{{{v_k}}}{{\eta_1 \eta_2 P{{\left| {{h_k}} \right|}^2}}}} \right\}} \right)P|{h_k}{|^2} + \sigma _b^2.
\end{array}
\end{equation}

To solve Problem (P$3$), we first present the following lemma.
\begin{lemma}
\begin{equation}\label{eq:44addd}
q' = \mathop \frac{F_1(x_j')}{F_2(x_j')} = \mathop {\max }\limits_{{x_j}} \; \mathop \frac{F_1(x_j)}{F_2(x_j)}
\end{equation}
if, and only if,
\begin{equation}\label{eq:44}
\begin{array}{l}
{F(q')} = \mathop {\max }\limits_{{x_j}} \;
{F_1(x_j)} - {q'}{F_2(x_j)}\\
= F_1\left( {x_j'} \right) - {q'}F_2\left( {x_j'} \right) = 0.
\end{array}
\end{equation}
\end{lemma}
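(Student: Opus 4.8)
The plan is to prove the two implications of the equivalence separately, using throughout the two structural facts recorded just after the definition of Problem (P3): that $F_1(x_j) \ge 0$ and, crucially, that $F_2(x_j) > 0$ on the whole feasible interval. That interval is closed and bounded, and both $F_1$ and $F_2$ are continuous on it (at the left endpoint $x_j = \sigma_b^2$ the factor $1 - \sigma_b^2/x_j$ merely vanishes, causing no discontinuity), so by the extreme value theorem the maximizers $x_j'$ named in the statement exist; I would note this first so that both maxima are well defined.

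First I would establish the forward (\emph{only if}) direction. Assume $x_j'$ attains $q' = \max_{x_j} \frac{F_1(x_j)}{F_2(x_j)}$. For every feasible $x_j$ optimality gives $\frac{F_1(x_j)}{F_2(x_j)} \le q'$, and since $F_2(x_j) > 0$ I may clear the denominator without reversing the inequality to obtain $F_1(x_j) - q' F_2(x_j) \le 0$ for all feasible $x_j$. Substituting $x_j'$ yields $F_1(x_j') - q' F_2(x_j') = F_1(x_j') - \frac{F_1(x_j')}{F_2(x_j')}F_2(x_j') = 0$, so this common upper bound of $0$ is actually achieved. Hence $F(q') = \max_{x_j}[F_1(x_j) - q' F_2(x_j)] = 0$, with the maximum attained at $x_j'$, which is exactly the claim.

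Next I would prove the reverse (\emph{if}) direction by running the same chain of inequalities backward. Assume $F(q') = 0$ with the maximum attained at $x_j'$, so that $F_1(x_j') - q' F_2(x_j') = 0$. From $F(q') = 0$ I have $F_1(x_j) - q' F_2(x_j) \le 0$ for every feasible $x_j$; dividing by $F_2(x_j) > 0$ gives $\frac{F_1(x_j)}{F_2(x_j)} \le q'$, so $q'$ upper-bounds the ratio on the feasible set. The equality $F_1(x_j') - q' F_2(x_j') = 0$ rearranges to $\frac{F_1(x_j')}{F_2(x_j')} = q'$, showing the bound is met at $x_j'$. Therefore $q' = \max_{x_j}\frac{F_1(x_j)}{F_2(x_j)}$ with maximizer $x_j'$, which finishes the equivalence.

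This is the Dinkelbach parametrization, and no individual step is deep; the part demanding the most care is simply the repeated appeal to the strict positivity $F_2(x_j) > 0$, since that is precisely what permits passing between the fractional inequality $\frac{F_1}{F_2} \le q'$ and the subtractive inequality $F_1 - q' F_2 \le 0$ in both directions with no sign reversal. If $F_2$ were allowed to vanish or change sign on the feasible set the equivalence could fail, so I would make sure the argument flags where that hypothesis is invoked. As a byproduct, the same sign analysis shows $q \mapsto F(q)$ is strictly decreasing with unique root $q'$, which is what justifies the Dinkelbach iteration used to solve Problem (P3).
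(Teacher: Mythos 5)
Your proof is correct and is exactly the classical Dinkelbach argument that the paper's own proof delegates to the citation of Dinkelbach's 1967 paper rather than writing out; clearing the denominator using $F_2(x_j)>0$ in both directions is precisely the right mechanism, and your attention to existence of the maximizers and to where strict positivity of $F_2$ is invoked only adds rigor the paper omits. No gaps.
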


\begin{proof}
\textbf{Lemma $1$} can be proved by following a similar approach as in~\cite{fractional}.
\end{proof}
\textbf{Lemma $1$} reveals that to solve Problem (P$3$) with an objective function in fractional form, there exists a corresponding Problem (P$4$) in subtractive form. Moreover, \textbf{Lemma $1$} provides the condition about when the two problem formulations lead to the same optimal solution~$x_j'$. To reach the condition in~\eqref{eq:44}, we focus on Problem (P$4$) first.

\begin{lemma}
The optimization objective in Problem (P$4$) is a concave function in terms of $x_j$.
\end{lemma}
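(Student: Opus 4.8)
The plan is to exploit the fact that, inside the alternating step, the objective of Problem (P$4$) is optimized over the single scalar $x_j$ while all $x_k$ with $k\neq j$ are frozen, so every summand indexed by $k\neq j$ collapses into a constant. Concretely, I would collect
\[
C = \sum_{k\neq j}\sqrt{\eta_1|g_k|^2(a_k-x_k)\bigl(1-\tfrac{\sigma_b^2}{x_k}\bigr)}\ge 0,
\]
\[
D = \sum_{k\neq j}\eta_1|g_k|^2(a_k-x_k)\tfrac{\sigma_b^2}{x_k}+\sigma_D^2> 0,
\]
and write $\phi(x_j)=\sqrt{\eta_1|g_j|^2(a_j-x_j)(1-\sigma_b^2/x_j)}$, so that $F_1(x_j)=\bigl(\phi(x_j)+C\bigr)^2$ and $F_2(x_j)=\eta_1|g_j|^2(a_j-x_j)\sigma_b^2/x_j+D$. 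Because the parameter $q$ in \eqref{eq:add43} is a value of the ratio $F_1/F_2$ with $F_1\ge 0$ and $F_2>0$, we have $q\ge 0$; hence it suffices to prove that $F_1$ is concave and $F_2$ is convex, since then $F_1-qF_2$ is concave as the sum of a concave function and a nonnegative multiple of a concave function.

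Establishing convexity of $F_2$ is the routine part. Rewriting $F_2(x_j)=\eta_1|g_j|^2\sigma_b^2\,(a_j/x_j-1)+D$, convexity follows because $a_j/x_j$ has second derivative $2a_j/x_j^3>0$ on $x_j>0$, where I use that $a_j>0$ on the feasible set (guaranteed by $\sigma_b^2\le x_j\le a_j$, which is exactly what makes the square root in (P$2$) real). Thus $-qF_2$ is concave.

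For $F_1$ the argument is more delicate, and this is where I expect the main obstacle. First I would introduce $\psi(x_j)=(a_j-x_j)(1-\sigma_b^2/x_j)=(a_j+\sigma_b^2)-x_j-a_j\sigma_b^2/x_j$, so that $\phi=\sqrt{\eta_1|g_j|^2\,\psi}$. Differentiating gives $\psi''(x_j)=-2a_j\sigma_b^2/x_j^3<0$, so $\psi$ is strictly concave and consequently $\phi^2=\eta_1|g_j|^2\psi$ is concave. The key step is then to upgrade concavity of $\psi$ to concavity of $\phi$: since the square root is a concave and nondecreasing function and $\psi\ge 0$ throughout the feasible region (precisely the condition $\sigma_b^2\le x_j\le a_j$ that makes $\phi$ real), the standard composition rule for a concave nondecreasing outer function applied to a concave inner function yields that $\phi$ is concave. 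With $\phi^2$ and $\phi$ both concave and $C\ge 0$, the expansion $F_1=\phi^2+2C\phi+C^2$ exhibits $F_1$ as a nonnegatively weighted sum of concave functions plus a constant, hence concave.

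The hard part is genuinely the concavity of $\phi$: one cannot argue term by term on $(\phi+C)^2$, because squaring a concave function destroys concavity in general, so the proof must hinge on the special structure of $\psi$, namely that the quantity \emph{under} the square root is itself concave, together with the boundary conditions $\sigma_b^2\le x_j\le a_j$ coming from feasibility (which simultaneously guarantee $\psi\ge 0$ and $a_j>0$). I would therefore state these feasibility facts explicitly before invoking the composition rule, and I would verify that the constants $C,D$ are well defined and of the claimed sign, so that $q\ge 0$ and the convex/concave combination indeed goes through, completing the proof that $F_1-qF_2$ is concave in $x_j$.
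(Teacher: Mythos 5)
Your proof is correct, but it takes a genuinely different route from the paper. The paper's Appendix~A proves Lemma~2 by brute force: it writes out the second-order derivative $\frac{\partial^2}{\partial x_j^2}\left(F_1(x_j)-qF_2(x_j)\right)$ explicitly (equation~\eqref{eq:a1}) and checks term by term that every summand is non-positive on the feasible interval, using in particular that the coefficient $(q+1)$ is nonnegative. You instead decompose the objective into structural building blocks: $\psi(x_j)=(a_j-x_j)(1-\sigma_b^2/x_j)$ is concave by a one-line second-derivative check, $\phi=\sqrt{\eta_1|g_j|^2\psi}$ is concave by the concave-nondecreasing composition rule, $F_1=\phi^2+2C\phi+C^2$ is then a nonnegative combination of concave pieces, $F_2$ is convex, and $q\ge 0$ (from the Dinkelbach initialization and update) closes the argument. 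Both proofs are valid and both implicitly rely on the same feasibility facts ($\sigma_b^2\le x_j\le a_j$, so that $\psi\ge 0$ and $a_j>0$); your version makes these hypotheses, and the role of the sign of $q$, explicit, which the paper leaves tacit. The trade-off is that the paper's computation is self-contained and mechanical, whereas yours is shorter, less error-prone, and makes the source of concavity transparent (the concavity of the quantity under the square root), at the cost of invoking the composition rule. Your observation that one cannot argue term by term on $(\phi+C)^2$ directly, and that the expansion into $\phi^2+2C\phi+C^2$ is what saves the argument, is exactly the right point of care; note only that the paper's sign check in fact needs just $q\ge -1$ while your decomposition needs $q\ge 0$, but both hold throughout Algorithm~1, so nothing is lost.
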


\begin{proof}
Please refer to Appendix A.
\end{proof}

According to \textbf{Lemma $2$ }and the fact that the feasible set for $x_j$ is convex, which can be seen in~\eqref{eq:45}, Problem (P$4$) is a convex optimization problem. However, the complicated expression of $F(q)$ makes it difficult to derive a closed-form solution through Karush-Kuhn-Tucker~(KKT) conditions, thus a numerical method~(i.e., bisection method~\cite{bisection}) is adopted here to handle Problem (P$4$).

Then, we follow a similar approach as in~\cite{fractional}~(known as the Dinkelbach algorithm) to derive the solution that satisfies~\eqref{eq:44}, which is summarized in \textbf{Algorithm $1$}.
\begin{algorithm}
\caption{Information Transfer Design Optimization at Relay $R_j$} \label{alg:Framwork}
\begin{algorithmic}[1]
\STATE \textbf{Input:} Fixed values of $x_k,\forall (k \ne j)$.
\STATE \textbf{Initialization:}
\STATE ~Set the parameter as $q = 0$, the index of iteration as $n_1 = 0$, and the judgement of convergence as $conv_1 = 0$,
\STATE ~Set the maximum number of iterations as $n_{max}^1$, and the threshold of termination as $\Delta_1$, which is a constant that approaches $0$.
\STATE \textbf{Repeat:}
\STATE ~Set ${n_1} = n_1 + 1$,
\STATE ~Solve Problem (P$4$) through bisection method~\cite{bisection}, and mark the optimal solution as $x_j^{n_1}$,
\STATE ~~\textbf{If} ${F_1(x_j^{n_1})} - {q}{F_2(x_j^{n_1})} < \Delta_1$
\STATE ~~~Mark the optimal solution as $x_j' = x_j^{n_1}$, and set $conv_1 = 1$.
\STATE ~~\textbf{else}
\STATE ~~~Update $q$ according to~\eqref{eq:add43}.
  \STATE \textbf{Until:} $conv_1 = 1$ or $n_1 = n_{max}^1$.
\STATE \textbf{Return:} The optimized information transfer design for $R_j$ as $x_j'$.
\end{algorithmic}
\end{algorithm}

\begin{lemma}
\textbf{Algorithm $1$} converges to $x_j'$ and $q'$, which satisfy~\eqref{eq:44} in \textbf{Lemma $1$}.
\end{lemma}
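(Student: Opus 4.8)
The plan is to invoke the classical convergence argument for the Dinkelbach iteration, specialized to the single-variable subproblem (P4). The central object is the auxiliary value function $F(q) = \max_{x_j} \{ F_1(x_j) - q F_2(x_j) \}$, regarded now as a function of the scalar parameter $q$, and I would first record three structural facts about it. Since $F_2(x_j) > 0$ for every admissible $x_j$ on the (compact) feasible interval of~\eqref{eq:45}, for any $q_1 < q_2$ we have $F_1(x_j) - q_2 F_2(x_j) < F_1(x_j) - q_1 F_2(x_j)$ pointwise, and taking maxima gives $F(q_2) < F(q_1)$; thus $F$ is strictly decreasing. Being a pointwise maximum of functions affine in $q$, $F$ is convex and therefore continuous in $q$. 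Finally, by \textbf{Lemma 1} the unique root of $F(q) = 0$ is exactly the optimal fractional value $q' = \max_{x_j} F_1(x_j)/F_2(x_j)$, and the maximizer attained there is the desired $x_j'$.

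Next I would track the sequence $\{ q^{(n_1)} \}$ produced by the update in~\eqref{eq:add43}, together with its associated maximizers $x_j^{(n_1)}$ of (P4). Starting from $q = 0 \le q'$, I would show by induction that $q^{(n_1)} \le q'$ for all $n_1$, since $q^{(n_1+1)} = F_1(x_j^{(n_1)})/F_2(x_j^{(n_1)}) \le \max_{x_j} F_1(x_j)/F_2(x_j) = q'$. Moreover, whenever the termination test has not yet fired we have $F(q^{(n_1)}) = F_1(x_j^{(n_1)}) - q^{(n_1)} F_2(x_j^{(n_1)}) > 0$, which rearranges to $F_1(x_j^{(n_1)})/F_2(x_j^{(n_1)}) > q^{(n_1)}$, i.e. $q^{(n_1+1)} > q^{(n_1)}$. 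Hence the iterates form a strictly increasing sequence bounded above by $q'$, so they converge to some limit $q^* \le q'$.

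It then remains to identify $q^*$ with $q'$. Writing $F(q^{(n_1)}) = F_2(x_j^{(n_1)}) \left( q^{(n_1+1)} - q^{(n_1)} \right)$, I would observe that the difference factor tends to $0$ because the $q$-sequence is convergent (hence Cauchy), while $F_2(x_j^{(n_1)})$ stays bounded since $x_j$ ranges over a compact interval; therefore $F(q^{(n_1)}) \to 0$. Continuity of $F$ then forces $F(q^*) = 0$, and uniqueness of the root via \textbf{Lemma 1} gives $q^* = q'$, with the accompanying maximizer equal to $x_j'$. This is precisely condition~\eqref{eq:44}, so \textbf{Algorithm 1} converges to $(x_j', q')$ as claimed.

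The main obstacle I anticipate is not any single inequality but the bookkeeping required to reconcile the idealized analysis with the approximate stopping rule, namely the tolerance $\Delta_1$ and the iteration cap $n_{max}^1$ appearing in the algorithm. The cleanest route is to establish convergence first for the exact iteration with $\Delta_1 = 0$ and an unbounded iteration budget, proving it converges to $(x_j', q')$, and then to remark that any positive $\Delta_1$ simply halts the monotone sequence once $F(q^{(n_1)})$ first drops below $\Delta_1$, yielding an $x_j^{(n_1)}$ within the prescribed tolerance of the condition in \textbf{Lemma 1}; the strict monotonicity and boundedness shown above guarantee this threshold is reached in finitely many steps for every $\Delta_1 > 0$.
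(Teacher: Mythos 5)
Your proposal follows essentially the same route as the paper's own proof: the classical Dinkelbach convergence argument built on the identity $F(q_{n}) = \left( q_{n+1} - q_{n} \right) F_2\!\left( x_j^{n} \right) > 0$, which yields strict increase of the $q$-iterates together with nonnegativity and decrease of $F$ along them. If anything, your closing step is tighter than the paper's, which merely asserts that the decreasing nonnegative sequence $F(q_n)$ ``finally approaches $0$''; you actually derive $F(q_n) \to 0$ from boundedness of $\{q_n\}$ above by $q'$ and continuity of $F$, and then identify the limit via uniqueness of the root, which is the correct way to finish the argument.
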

\begin{proof}
For the purpose of explanation, at the $n$-th iteration, denote the parameter as $q_n$, and the optimized solution to Problem (P$4$) as ${x_j^n}$. According to~\eqref{eq:add43}, the parameter is updated by $q_{n+1} = \frac{F_1(x_j^n)}{F_2(x_j^n)}$ in the next iteration. Assume that the iteration process will not be terminated in the $(n+1)$-th iteration.

First, it's shown that, the optimized value $F(q)$ in Problem (P$4$) is non-negative.
\begin{equation}\label{eq:b2}
\begin{array}{l}
F\left( q_{n+1} \right)
 = \mathop {\max }\limits_{x_j} \; {F_1\left( {x_j}\right) - {q_{n+1}}{F_2}\left( {x_j}\right)}\\
 \ge F_1\left( {x_j^{n}} \right) - {q_{n+1}}F_2\left( {x_j^{n}} \right) = 0.
\end{array}
\end{equation}

Next, it's revealed that the parameter q increases after each iteration. According to \textbf{Lemma $1$} and~\eqref{eq:b2}, since the iteration process is not terminated in the $(n+1)$-th iteration, $F\left( q_{n} \right) > 0$ and $F\left( q_{n+1} \right) > 0$ must be true. Thus, $F\left( q_{n} \right)$ can be expressed as
\begin{equation}\label{eq:b3}
\begin{array}{l}
F\left( q_{n} \right)= F_1\left( {x_j^{n}} \right) - {q_{n}}F_2\left( {x_j^{n}} \right)\\
 = \left( {{q_{n + 1}} - {q_n}} \right)F_2\left( {x_j^{n}} \right) > 0,
\end{array}
\end{equation}
since $F_2\left( {x_j^{n}} \right) > 0$, ${q_{n + 1}} > {q_n}$ must be true.

Further, it's shown that $F\left( q \right)$ decreases after each iteration. Due to the increasing of $q$ after each iteration, $F\left( q_{n} \right)$ can be expressed as
\begin{equation}\label{eq:b1}
\begin{array}{l}
F\left( q_{n} \right)
= \mathop {\max }\limits_{x_j} \; {F_1\left( {x_j}\right) - {q_{n}}{F_2}\left( {x_j}\right)}\\
 \ge F_1\left( {x_j^{n+1}} \right) - {q_{n}}F_2\left( {x_j^{n+1}} \right)\\
 > F_1\left( {x_j^{n+1}} \right) - {q_{n+1}}F_2\left( {x_j^{n+1}} \right)\\
  = F\left( q_{n+1} \right).
\end{array}
\end{equation}

Based on the properties in~\eqref{eq:b2} and~\eqref{eq:b1}, after a large enough number of iterations, $F\left( q \right)$ finally approaches to $0$. As a result, \textbf{Algorithm $1$} converges to $x_j'$ and $q'$, which satisfy~\eqref{eq:44} in \textbf{Lemma $1$} and $x_j'$ is the optimal solution to Problem (P$3$).
\end{proof}

Note that, the decomposed subproblems to optimize $x_j,j\in \{1,\cdots,K\}$ have the same problem formulation as Problem (P$3$) except that the index number $j$ is different, thus they can be solved using \textbf{Algorithm $1$}. As a result, Problem (P$2$) is solved using the proposed alternating-Dinkelbach optimization, and the corresponding procedure is summarized in \textbf{Algorithm $2$}.
\begin{algorithm}
\caption{Embedded Problem Optimization} \label{alg:Framwork}
\begin{algorithmic}[1]
\STATE \textbf{Input:} A battery state $\bold{S}(t)$ and a energy level variation design $\bold{v}$.
\STATE \textbf{Initialization:}
\STATE ~Set the index of iteration as $n_2 = 0$, and the judgement of convergence as $conv_2 = 0$,
\STATE ~Set the maximum number of iterations as $n_{max}^2$, and the threshold of termination as $\Delta_2$, which is a constant that approaches $0$,
\STATE ~Set the optimized objective value of Problem (P$2$) as $J^* = 0$.
\STATE \textbf{Repeat:}
\STATE ~Set $n_2 = n_2 + 1$, calculate the index of the decomposed subproblem as $j = {n_2} - \left\lfloor {\frac{{{n_2}}}{K}} \right\rfloor *K$, where $\left\lfloor \right\rfloor$ will round down.
\STATE ~Derive $x_j'$ using \textbf{Algorithm $1$} with the fixed values of $x_{k}(t),\forall (k \ne j)$, and calculate $J(x_1,\cdots,x_j',\cdots,x_K)$ in~\eqref{eq:43}.
\STATE ~\textbf{If} $J(x_1,\cdots,x_j',\cdots,x_K) - J^* < \Delta$
\STATE ~~Set $conv_2 = 1$, and mark the optimal solution as $[x_1^*,\cdots,x_K^*]$,
\STATE ~~Calculate $\lambda_{k,I}^*,\forall k$ according to the definition ${x_k} = \lambda_{k,I} P|{h_k}{|^2} + \sigma _b^2$.
\STATE ~\textbf{else}
\STATE ~~Update $n_2 = n_2 + 1$, and set $J^* = J(x_1,\cdots,x_j',\cdots,x_K)$.
\STATE \textbf{Until:} $conv_2 = 1$ or $n_2 = n_{max}^2$.
\STATE \textbf{Return:} The optimal embedded information transfer design $\bold{I}^* = (\lambda_{1,I}^*,\cdots,\lambda_{K,I}^*)$.
\end{algorithmic}
\end{algorithm}

\begin{lemma}
The proposed \textbf{Algorithm $2$} converges to the optimal solution to Problem (P$2$).
\end{lemma}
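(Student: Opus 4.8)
The plan is to recognize \textbf{Algorithm $2$} as an exact block (coordinate) ascent scheme for Problem (P$2$): at each iteration exactly one coordinate $x_j$ is re-optimized while the remaining $x_k$, $k\neq j$, are held fixed, and by \textbf{Lemma $3$} the inner routine \textbf{Algorithm $1$} returns the \emph{exact} maximizer $x_j'$ of the corresponding single-variable problem (P$3$). The proof would therefore proceed in two stages: first establish that the sequence of objective values produced by the outer loop converges, and then argue that the limit coincides with the optimal value of Problem (P$2$).

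First I would prove monotonicity and boundedness. Denote by $J^{(n)}$ the value of $J(x_1,\dots,x_K)$ after the $n$-th coordinate update. Since that update replaces $x_j$ by the global maximizer of $J$ along the $j$-th coordinate while leaving the other coordinates unchanged, we have $J^{(n+1)}\ge J^{(n)}$, so $\{J^{(n)}\}$ is non-decreasing. The feasible region in \eqref{eq:43} is a compact box, each $x_k$ ranging over a closed bounded interval that is bounded away from $0$ by $\sigma_b^2$; on this box $J$ is continuous because every denominator $x_k$ and the strictly positive quantity $F_2$ stay bounded away from $0$, hence $J$ is bounded above. A non-decreasing sequence that is bounded above converges, so $J^{(n)}\to J^{\infty}$ for some finite $J^{\infty}$, and the termination test $J(\cdots)-J^{\ast}<\Delta_2$ with $\Delta_2\to 0$ is eventually met.

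The remaining and hardest step is to identify $J^{\infty}$ with the \emph{global} optimum of Problem (P$2$). Here I would extract a limit point $(x_1^{\ast},\dots,x_K^{\ast})$ of the bounded iterate sequence (guaranteed by compactness) and show it is coordinatewise maximal: by continuity of $J$ and the exactness of each update, no single-coordinate move can strictly increase $J$ at $(x_1^{\ast},\dots,x_K^{\ast})$. One then argues that such a coordinatewise-maximal point satisfies the Karush--Kuhn--Tucker conditions of (P$2$) and that these conditions are sufficient for global optimality. The difficulty is precisely this last implication: \textbf{Lemma $2$} only delivers concavity of the objective \emph{in each coordinate separately} (concavity of the (P$4$) objective in $x_j$), not joint concavity of $J$, so coordinate ascent cannot be declared globally optimal by convexity alone. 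I expect this to be the main obstacle, and I would close it either by exhibiting a change of variables under which (P$2$) becomes jointly concave, or by invoking a standard block-coordinate-ascent convergence theorem (one that requires each per-coordinate maximizer to be unique) to guarantee that every limit point is stationary, and then supplying a problem-specific argument that (P$2$) possesses no stationary points other than its global maximizer.
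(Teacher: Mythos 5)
Your first stage is, in substance, the paper's entire proof: Appendix~B shows that each coordinate update is exact (via \textbf{Lemma $3$}/\textbf{Algorithm $1$}), so $J$ is non-decreasing across iterations, and then exhibits an explicit upper bound $J_{upper}$ (obtained by the infeasible substitution $\lambda_{k,I}=1$ together with $\lambda_{k,F}+\lambda_{k,B}=1$) to conclude that the monotone sequence of objective values converges. Your compactness-and-continuity argument for boundedness is a cleaner route to the same conclusion, but both arguments stop at the same place: convergence of the value sequence $\{J^{(n)}\}$.

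The ``hardest step'' you flag --- identifying the limit with the \emph{global} optimum of the non-jointly-concave Problem~(P$2$) --- is a genuine gap, and you should know that the paper does not close it either. Its proof ends with the assertion that the objective ``is increased and finally approaches the upper bound,'' which is not justified (indeed $J_{upper}$ comes from an infeasible point and cannot generally be attained) and in any case only restates convergence, not optimality. \textbf{Lemma $2$} gives concavity of the (P$4$) objective in each $x_j$ separately, so block-coordinate ascent is only guaranteed to reach a coordinatewise-maximal (stationary) point; without joint concavity, a uniqueness-based block-coordinate-ascent theorem, or a problem-specific argument excluding spurious stationary points, the word ``optimal'' in the lemma is not established. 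Your proposal is therefore at least as rigorous as the published argument, and the repairs you sketch are exactly what would be needed to make the lemma's claim correct as stated; do not expect to find them in the paper.
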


\begin{proof}
Please refer to Appendix B.
\end{proof}

When the optimal solution to the embedded problem is obtained through Algorithm $2$, the overall problem can be solve. In particular, the stored power at batteries should be used up at the last time slot or they will be wasted, which is described by ${{\bf{{v}}}}^*(T) = {\bf{S}}(T)$. Thus, ~\eqref{eq:addb2} can be solved as
\begin{equation}\label{eq:34}
{U^*}\left( {{\bf{S}}\left( T \right)},T \right) = {R}({\bf{S}}(T),{\bf{S}}(T),{\bold{I}}^*({\bf{S}}(T)),T).
\end{equation}
Then, optimality equations in~\eqref{eq:33} can be solved through backward induction algorithm~\cite{dyn}. In conclusion, the corresponding procedure to solve Problem (P$1$) is described in \textbf{Algorithm $3$}. Note that, though the performance of the optimal joint PS and battery operation design derived from \textbf{Algorithm $3$} cannot generally be achieved in practice because it requires non-causal CSI, a theoretical upper bound on performance of the proposed harvest-use-store PS relaying strategy is provided.
\begin{algorithm}
\caption{Joint Power Splitting and Battery Operation Optimization with Non-causal CSI} \label{alg:Framwork}
\begin{algorithmic}[1]
\STATE \textbf{Input:} Non-causal channel state information~(i.e., $h_k(t),g_k(t),\forall k, \forall t$).
\STATE \textbf{Initialization:}
\STATE ~Set the time index as $t=T$;\\
\STATE ~\textbf{for all} allowable battery state~${\bf{S}}(T)$\\
\STATE ~~Deduce $\bold{I}^*({\bf{S}}(T))$ for the state evolution $\bold{v}(T) = \bold{S}(T)$ according to \textbf{Algorithm $2$},\\
\STATE ~~Deduce the optimal summation of payoffs ${U^{*}}\left( {{\bf{S}}}(T) \right)$ according to~\eqref{eq:34},\\
\STATE ~~Record the optimized sets of decisions as ${\bf{L}}(T)|_{{\bf{S}}(T)} = \{ (\bold{S}(T),\bold{I}^*({\bf{S}}(T)))\}$.
\STATE \textbf{Repeat:}
\STATE ~$t=t-1$;
\STATE ~\textbf{for all} allowable battery state~${\bf{S}}(t)$
\STATE ~~\textbf{for all} allowable state evolution~${\bf{v}}(t)$\\
\STATE ~~~Deduce $\bold{I}^*(\bold{v}(t))$ for the $\bold{v}(t)$ according to \textbf{Algorithm $2$},\\
\STATE ~~~Find $U^{*}\left( {{\bf{S}}\left( t + 1 \right)} \right)$ with ${{\bf{S}}\left( {t + 1} \right)} = {{\bf{S}}\left( {t} \right)} - \bold{v}(t)$ in the last repetition.
\STATE ~~Deduce the optimal summation of payoffs $U^{*}\left( {{\bf{S}}\left( t \right)} \right)$ according to~\eqref{eq:33},\\
\STATE ~~Mark the optimal decision as $(\bold{v}^{*}(t),\bold{I}^*(\bold{v}^{*}(t))) = \arg U^{*}\left( {{\bf{S}}\left( t \right)} \right)$ according to~\eqref{eq:33},\\
\STATE ~~Record the optimized sets of decisions as ${\bf{L}}(t)|_{{\bf{S}}(t)} = \{(\bold{v}^{*}(t),\bold{I}^*(\bold{v}^{*}(t))),{\bf{L}}(t+1)\}$.
  \STATE \textbf{Until:} $t=1$.
\STATE \textbf{Return:} The optimal set of decisions ${\bf{L}}(1)|_{{\bf{S}}(1)}$.
\end{algorithmic}
\end{algorithm}

\section{Optimized Joint Power Splitting and Battery Operation Design with Causal Channel State Information}

In the previous section, the optimal joint PS and battery operation design is derived with the knowledge of non-causal CSI, which reveals a theoretical bound for the proposed harvest-use-store PS relaying strategy. Considering that the non-causal CSI is hard to obtain in practice, two optimized joint PS and battery operation designs are presented in this section when the CSI is known only causally. First, a Finite-state Markov Chain model approach is proposed, which extends the proposed layered optimization method by making use of the statistical properties of the CSI. Then, a greedy method approach is designed, which requires no information about the CSI during the subsequent time slots.
\vspace*{-0.5em}
\subsection{A Finite-state Markov Chain Model Approach}

In the proposed layered optimization method, both the information transfer design and the energy level variation design need to be optimized. Note that, when the CSI is known only causally, the information transfer design needs to be optimized at the beginning of each time slot, while the energy level variation design needs to be optimized before transmission because it is coupled over time. However, the optimized information transfer design is required in the embedded problem before transmission. To ensure this, the optimization of the joint PS and battery level variation design with causal CSI is carried out in two steps. Before transmission, similar to the non-causal case, both the embedded information transfer design and the overall energy level variation design are optimized with the modeled link gains. Then, at the beginning of each transmission, the information transfer design is updated and optimized with accurate link gains obtained from the causal CSI.

Before transmission, an $m$-state First-order Markov chain model~\cite{markov} is adopted to represent link gains using the knowledge of the distributions of channels, which has been proved to be an accurate model for slow fading channels in~\cite{markov}. As a result, the continuous variable representing the amplitude of each link gain is quantized by mapping it to $m$ non-overlapping intervals, each of which is called a state. Different methods to obtain the boundary values of these intervals have been concluded in~\cite{markov}, and the equal probable steady state method is adopted in this paper, where the steady state probabilities of the next states are equal. Denoting the sets of quantized link gains as ${{\bf{\zeta }}_{S,k}} = \left\{ {h_k^1,h_k^2, \cdots h_k^m} \right\}$ for $h_k(t), \forall t$ and ${{\bf{\zeta }}_{k,D}} = \left\{ {g_k^1,g_k^2, \cdots g_k^m} \right\}$ for $g_k(t), \forall t$, the link gains in Problem (P$1$) are replaced by
\begin{equation}\label{eq:f1}
\begin{array}{l}
| h_k(t) | = |\varsigma_{S,k}(t)|, \forall k, \forall t\\
| g_k(t) | = |\varsigma_{k,D}(t)|, \forall k, \forall t
\end{array}
\end{equation}
where $|\varsigma_{S,k}(t)| \in {{\bf{\zeta }}_{S,k}}$ and $|\varsigma_{k,D}(t)| \in {{\bf{\zeta }}_{k,D}}$. Further, to give a distinct demonstration of the finite-state Markov chain model approach, a block fading channel model is considered in this paper, i.e., values of link gains at different time slots are independent. Note that, the correlated channel models can also be analyzed by using the finite-state Markov chain models presented in~\cite{markov}. As a result, the transition probability of states between two adjacent time slots can be given by
\begin{equation}\label{eq:71add}
\begin{array}{l}
p(\varsigma_{S,k}(t)|\varsigma_{S,k}(t-1)) = p(\varsigma_{S,k}(t)), \forall k, \forall (t \ne 1) \\
p(\varsigma_{k,D}(t)|\varsigma_{k,D}(t-1)) = p(\varsigma_{k,D}(t)). \forall k, \forall (t \ne 1)
\end{array}
\end{equation}
According to the equal probable steady state method, the probability of each state at time slot $t$ is given by
\begin{equation}\label{eq:72add}
p(\varsigma_{S,k}(t)) = p(\varsigma_{k,D}(t)) = 1/m. \forall k, \forall t\\
\end{equation}

To optimize the energy level variation design~(i.e., $\bold{v}(t),\forall t$) before transmission, we follow a similar approach as the proposed layered optimization method in the previous section. First, the following components are defined.
\begin{enumerate}
\setcounter{enumi}{5}
\item channel state: The channel state is defined as the quantized link gains of all channels, which is described by ${\bf{H}}(t) = \{|\varsigma_{S,1}(t)|,\cdots,|\varsigma_{S,K}(t)|,|\varsigma_{1,D}(t)|,\cdots,|\varsigma_{K,D}(t)|\}$ at time slot $t$.
\end{enumerate}
With a channel state ${\bf{H}}(t)$, the payoff defined in~\eqref{eq:31} can be rewritten as $R'({\bold{S}}(t),{\bf{H}}(t),\bold{D}(t))$, which is obtained via replacing $|h_k(t)|,|g_k(t)|$ by $|\varsigma_{S,k}(t)|,|\varsigma_{k,D}(t)|$ in~\eqref{eq:31}. As a result, the expected summation of payoffs can be utilized to help the system make decisions.

\begin{enumerate}
\setcounter{enumi}{6}
\item expected summation of payoffs: Similar to the definition in~\eqref{eq:32}, the expected summation of payoffs is defined by
    \begin{equation}\label{eq:72}
  \begin{array}{l}
  {\tilde U}({\bf{S}}(t),{\bf{H}}(t),{{\bf{{D}}}}(t),t) = {R'}({\bf{S}}(t),{\bf{H}}(t),{{\bf{{D}}}}(t),t) \\
   + \sum\limits_{{\bf{H}}(t + 1)} \left\{p \left( {{\bf{H}}(t + 1)} \right)\tilde U\left({\bf{S}}(t + 1),{\bf{H}}(t + 1),\right.\right.\\
  \quad\quad \quad \quad \quad \left. \left. {\bf{D}}(t + 1),(t+1) \right) \right\} ,\; \forall (t \ne T)\\
  {\tilde U}\left({\bf{S}}\left(T\right),{\bf{H}}(T),{{\bf{{D}}}}\left(T\right),T\right) \\ = {R'}\left({\bf{S}}\left(T\right),{\bf{H}}(T),{\bf{{D}}}\left(T\right),T\right).
  \end{array}
  \end{equation}
\end{enumerate}
The goal is to maximize the expected summation of payoffs. First, the embedded optimization problem to derive $R'({\bold{S}}(t),{\bf{H}}(t),\bold{D}(t),t)$ can be solved by using \textbf{Algorithm $2$}, then the programming problem to derive the optimized energy level variation design can be solved by using backward induction. In particular, the Bellman equation is described by~\cite{dyn}
\begin{equation}\label{eq:73}
\begin{array}{l}
{{\tilde U}^*}\left( {{\bf{S}}\left( t \right)},{\bf{H}}(t) ,t\right) = \mathop {\max}\limits_{{\bf{v}}(t)} \{ R'({\bold{S}}(t),{\bf{H}}(t),\bold{D}(t),t) \\
+ \sum\limits_{{\bf{H}}(t + 1)} \frac{1}{{m^{2K}}} {\tilde U}^*\left( {{\bf{S}}\left( {t + 1} \right)}, \bold{H}(t+1),(t+1)\right)\},\;\forall (t \ne T)
\end{array}
\end{equation}
\begin{equation}\label{eq:74}
{{\tilde U}^*}\left( {{\bf{S}}\left( T \right)},{\bf{H}}(T),T \right) = {R'}({\bf{S}}(T),{\bf{H}}(T),{\bf{S}}(T),{\bold{I}}^*(T),T),
\end{equation}
where the $\frac{1}{{m^{2K}}}$ in~\eqref{eq:73} is derived from~\eqref{eq:72add}, ${R'}({\bf{S}}(T),{\bf{H}}(T),{\bf{S}}(T),{\bold{I}}^*(T),T)$ in~\eqref{eq:74} is interpreted as $\bold{D}(T) = ({\bf{v}}(T),{\bold{I}}^*(T)) = ({\bf{S}}(T),{\bold{I}}^*(T))$, which implies that the stored power at all relays should be used up at the last time slot. Then, the procedure to optimize the energy level variation design before transmission is concluded in \textbf{Algorithm $4$}.
\begin{algorithm}
\caption{Expected Summation of Payoffs Optimization}\label{alg:Framwork}
\begin{algorithmic}[1]
\STATE \textbf{Input:} The set of states ${{\bf{\zeta }}_{S,k}}$, ${{\bf{\zeta }}_{k,D}},\forall k$ according to the equal probable steady state method~\cite{markov}.
\STATE \textbf{Initialization:}
\STATE ~Set the time index as $t=T$,\\
\STATE ~\textbf{for all} possible channel state~${\bf{H}}(T)$\\
\STATE ~~Step $4-7$ as in \textbf{Algorithm $3$} using~\eqref{eq:74}.
\STATE \textbf{Repeat:}
\STATE ~$t=t-1$;
\STATE ~\textbf{for all} possible channel state~${\bf{H}}(t)$\\
\STATE ~~Step $11-16$ as in \textbf{Algorithm $3$} using~\eqref{eq:73}.
\STATE \textbf{Until:} $t=1$.
\STATE \textbf{Return:} The look up table $\{\bold{v}^*({\bf{S}}(1)),{\bf{H}}(1)),\cdots,\bold{v}^*({\bf{S}}(T)),{\bf{H}}(T))\}$.
\end{algorithmic}
\end{algorithm}

By using \textbf{Algorithm $4$}, a look up table is established before transmission, which records the optimized energy level variation designs with each possible channel state and battery state~(i.e., $(\bold{S}(t),\bold{H}(t))$). After that, the information transfer design is updated and optimized with the causal CSI. In particular, at each transmission, the system first searched for the optimized energy level variation design in the look up table, where $\bold{H}(t)$ is obtained by mapping the accurate link gains to the quantized link gains. Next, the information transfer design at this time slot is optimized with the battery state, the accurate link gains and the located energy level variation design. This optimization problem has the same problem formulation as the embedded optimization problem in~\eqref{eq:add33} except that the optimized result of the energy level variation design is obtained from the look up table, which can be handled by using \textbf{Algorithm $2$}.

In summary, the two-step procedure to optimize the joint PS and battery operation design with a finite-state Markov chain model approach is described in \textbf{Algorithm $5$}.

\begin{algorithm}
\caption{Joint Power Splitting and Battery Operation Optimizing with a Finite-state Markov Channel Model Approach} \label{alg:Framwork}
\begin{algorithmic}[1]
\STATE \textbf{Input:} Causal CSI~(i.e., $h_k(t),g_k(t), \forall k$ at time slot $t$), and the look up table through \textbf{algorithm $4$}.
\STATE \textbf{Initialization:}
\STATE ~~Set the time index as $t=1$, and the battery state as $\bold{S}(t)=\{0,\cdots,0\}$,
\STATE ~~Set the optimized energy level variation solution at the last time slot as~$\bold{v}^*(t-1) = \{0,\cdots,0\}$.
\STATE \textbf{Repeat:}
\STATE ~~Update the battery state $\bold{S}(t)$ according to~\eqref{eq:add3} with $\bold{v}^*(t-1)$,
\STATE ~~Evaluate the channel state $\bold{H}(t)$ by mapping $h_k(t),g_k(t),\forall k$ to ${\bf{\zeta }}_{S,k},{\bf{\zeta }}_{k,D},\forall k$,
\STATE ~~Search for the optimal energy level variation design $\bold{v}^*(t)$ from the look up table,
\STATE ~~Deduce the optimized information transfer design $I^*(t)$ using \textbf{Algorithm $2$} with $\bold{S}(t)$ and $\bold{v}^*(t)$,
\STATE ~~Update $t=t+1$.
\STATE \textbf{Until:} $t=T$.
\STATE \textbf{Return:} The optimized joint PS and battery operation design $(\bold{v}^*(t),I^*(t))$ at time slot $t$.
\end{algorithmic}
\end{algorithm}
\vspace*{-0.5em}
\subsection{A Greedy Method Approach}

To utilize the statistical properties of CSI, a look up table is established in the above finite-state Markov chain model approach, which requires high computation complexity when the number of channel
states is large. To lower the complexity, a greedy method that requires no information of the subsequent time slots is proposed. The key idea is to maximize the throughput at each transmission, and ignore the effects on the subsequent time slots. As a result, Problem (P$1$) is approached by time decoupled throughput optimization problems at each time slot, where the formulated optimization problem at time slot $t$ is given by
\begin{equation}\label{eq:54add}
\begin{array}{l}
\left( {P5} \right): \; \mathop {\max }\limits_{\bold{v}(t),\bold{I}(t) } \quad R(\bold{v}(t),\bold{I}(t))= \frac{1}{2} {\log \left( {1 + SNR\left( t \right)} \right)}\\
s.t. \quad 0 \le \lambda _{k,I}\left( t \right) \le 1 + \min \left\{ {0,\frac{{{v_k}\left( t \right)}}{{{\eta _1}{\eta _2}P{{\left| {{h_k}\left( t \right)} \right|}^2}}}} \right\},\;\forall k\\
{v_k}\left( t \right) = {n(t)\frac{B_{max}}{L}, n(t) \in \left\{ - L,\cdots,0,\cdots,L \right\}},\;\forall k\\
{B_k}\left( t \right)  - B_{max} \le {v_k}\left( t \right) \le {B_k}\left( t \right),\;\forall k\\
{B_k}\left( t \right) = B_k\left( t-1 \right) - v_k^*(t-1), \;\forall k
\end{array}
\end{equation}
where $v_k^*(t-1)$ is the optimized energy level variation design at the last time slot, and the fourth constraint implies that the energy level variation design results in evolution of battery states, which is derived from~\eqref{eq:add3}. Since the goal is to maximize $R(\bold{v}(t),\bold{I}(t))$, while the effects of design on the subsequent time slots are not evaluated, the following insight can be given to optimize the energy level variation design.

\begin{theorem}
If all batteries are empty initially, the optimized joint PS and battery operation design using the greedy method is to use up the harvested energy at each transmission.
\end{theorem}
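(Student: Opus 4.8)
The plan is to prove the claim by induction on the time slot $t$, showing that under the greedy rule every battery stays empty, i.e.\ $B_k(t)=0$ for all $k$, which in turn forces all harvested energy to be consumed at each transmission. The base case $B_k(1)=0$ is the stated initial condition. For the inductive step, suppose $B_k(t)=0$ for all $k$ at the start of slot $t$. By the range constraint~\eqref{eq:15add}, an empty battery cannot be discharged, so the feasible energy level variations collapse to $-B_{max}\le v_k(t)\le 0$; the only available battery actions are ``do nothing'' ($v_k(t)=0$) or ``charge'' ($v_k(t)<0$). It therefore suffices to show that the greedy objective in Problem~(P5) is maximized by $v_k(t)=0$ for every $k$, since then $B_k(t+1)=B_k(t)-v_k(t)=0$ closes the induction, while $\lambda_{k,B}(t)=0$ means no harvested power is diverted into storage.

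First I would establish that charging is (weakly) suboptimal for the single-slot objective via a direct exchange argument rather than differentiation. Take any feasible greedy decision with $v_k(t)<0$ at some relay and construct a competitor that sets $v_k(t)=0$, reallocates the battery-charging split $\lambda_{k,B}(t)$ entirely to the information split $\lambda_{k,I}(t)$, and leaves the forwarding split $\lambda_{k,F}(t)$---hence the transmit power $p_{k,R}(t)$ in~\eqref{eq:14add}---unchanged. Feasibility is immediate because $\lambda'_{k,I}=\lambda_{k,I}+\lambda_{k,B}=1-\lambda_{k,F}\le 1$ and every other relay is untouched.

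The heart of the argument is that this exchange does not decrease $SNR(t)$. I would reparametrize each relay's contribution to $J$ in~\eqref{eq:43} by the pair $\left(y_k,x_k\right)$ with $y_k=a_k-x_k=p_{k,R}(t)/\eta_1$, so that the $k$-th numerator summand is $\sqrt{\eta_1|g_k|^2 y_k\left(1-\sigma_b^2/x_k\right)}$ and the $k$-th denominator summand is $\eta_1|g_k|^2 y_k\,\sigma_b^2/x_k$. The exchange holds $y_k$ (the transmit power) fixed while raising $x_k$ from $\lambda_{k,I}P|h_k|^2+\sigma_b^2$ to $\lambda'_{k,I}P|h_k|^2+\sigma_b^2$ and lifting $a_k$ to $P|h_k|^2+\sigma_b^2$, so that $a_k-x_k=y_k$ is preserved. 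With $y_k$ fixed and $x_k$ increased, $\sigma_b^2/x_k$ decreases; thus the $k$-th numerator summand increases and the $k$-th denominator summand decreases, while all other summands are unchanged. Hence the numerator of $J$ grows and the denominator shrinks, giving a strict increase of $SNR(t)$ whenever relay $k$ actually forwards ($y_k>0$) and leaving it unchanged in the degenerate silent case. This forces any greedy optimum to satisfy $v_k(t)=0$. The chosen reparametrization---fix transmit power, charge the freed split to information---is the decisive trick, since a naive attempt to show $J$ is monotone in $v_k(t)$ with $\lambda_{k,I}(t)$ held fixed fails: increasing one relay's power with its split fixed inflates both the forwarded signal and its amplified noise, and the sign of $\partial J/\partial v_k$ is then indefinite across relays.

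I expect the main obstacle to be precisely this monotonicity step, i.e.\ selecting variables in which the removal of charging decouples cleanly into ``numerator up, denominator down'' across the coupled multi-relay beamforming SNR. Once that exchange inequality is in hand, the greedy optimum sets $v_k(t)=0$ for all $k$, so the battery neither charges nor discharges, the hypothesis $B_k(t+1)=0$ propagates, and the harvested energy is used up at every transmission, as claimed.
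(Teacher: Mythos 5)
Your proposal is correct and follows essentially the same route as the paper: induction over time slots, with the key step being the exchange that sets $\lambda_{k,B}(t)=0$, transfers that split entirely to $\lambda_{k,I}(t)$, and holds $\lambda_{k,F}(t)$ (hence the transmit power) fixed. The only difference is that you spell out why this exchange raises $SNR(t)$ (numerator summand up, denominator summand down in~\eqref{eq:43} at fixed $y_k=a_k-x_k$), a monotonicity detail the paper merely asserts by citing~\eqref{eq:2add},~\eqref{eq:6},~\eqref{eq:7}, and~\eqref{eq:9}.
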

\begin{proof}
First, similar to the formulation of~\eqref{eq:add33}, since $R(\bold{v}(t),\bold{I}(t))$ is monotonically increasing in terms of $SNR(t)$, Problem (P$5$) is equivalent to the SNR maximization problem~\cite{dis}.

The theorem is proved step by step. At the first time slot, all batteries are empty~(i.e., $B_{k}(1)=0,\forall k$). Since the power provided by battery discharging cannot exceed the power stored in the battery~(i.e., $b_{k,F}(t) \le B_{k}(t),\forall k,\forall t$), the power discharged at time slot $1$ is
\begin{equation}\label{eq:f12}
b_{k,F}(1) = 0.\forall k
\end{equation}
To maximize the optimization objective $SNR(1)$, the PS ratio for battery charging should satisfy
\begin{equation}\label{eq:f13}
\lambda_{k,B}(1) = 0,\forall k
\end{equation}
because otherwise $SNR(1)$ can be further improved by reassigned the PS ratios that: $\lambda'_{k,B}(1) = 0$, $\lambda_{k,F}(1)$ remains the same, $\lambda'_{k,I}(1) = \lambda_{k,I}(1) + \lambda_{k,B}(1)$, which can be seen according to~\eqref{eq:2add},~\eqref{eq:6},~\eqref{eq:7}, and~\eqref{eq:9}. Thus, no signal power is spit for battery charging and the harvested energy is used up at this transmission. As a result, each battery stays empty at the next time slot.
\begin{equation}\label{eq:f14}
{B_k}(2) = {B_k}\left( 1 \right) + \eta_1 \eta_2 {\lambda_{k,B}}\left( 1 \right){\left| {{y_k}\left( 1 \right)} \right|^2} - {b_{k,F}}\left( 1 \right)=0, \forall k
\end{equation}
which is derived from~\eqref{eq:8}.

Assuming that the harvested energy is used up at time slot $(t-1)$ and each battery stays empty at time slot $t$~(i.e., ${B_k}(t) = 0. \forall k$), then similar to the above analysis, it can be derived that
\begin{equation}\label{eq:f15}
\begin{array}{l}
b_{k,F}(t) = 0,\forall k\\
\lambda_{k,B}(t) = 0.\forall k
\end{array}
\end{equation}
In conclusion, the optimized joint PS and battery operation design using the greedy method is to use up the harvested energy at each time slot.
\end{proof}

Based on Theorem $1$ and the definition of energy level variation in~\eqref{eq:12}, it can be derived that
\begin{equation}\label{eq:f16}
{v_k}^*\left( t \right) = 0.\;\forall k, \forall t
\end{equation}
Substituting~\eqref{eq:f16} in Problem (P$5$), the optimization problem can be rewritten as
\begin{equation}\label{eq:54add}
\begin{array}{l}
\left( {P6} \right): \; \mathop {\max }\limits_{\bold{I}(t) } \quad \frac{1}{2} {\log \left( {1 + SNR_1\left( t \right)} \right)}\\
s.t. \quad 0 \le \lambda _{k,I}\left( t \right) \le 1,\;\forall k
\end{array}
\end{equation}
where $SNR_1\left( t \right)$ is obtained via replacing ${v_k}\left( t \right)$ by $0$ in~\eqref{eq:9}. As a result, Problem (P$6$) has the same problem formulation as the embedded optimization problem in~\eqref{eq:add33} and can be solved by using \textbf{Algorithm $2$}. In summary, the procedure to optimize the joint PS and battery operation design with a greedy method is described in \textbf{Algorithm $6$}, which solves Problem (P$6$) at different time slots step by step.

\begin{algorithm}
\caption{Joint PS and Battery Operation Optimizing with a Greedy Method} \label{alg:Framwork}
\begin{algorithmic}[1]
\STATE \textbf{Input:} Causal CSI~(i.e., $h_k(t),g_k(t), \forall k$ at time slot $t$).
\STATE \textbf{Initialization:}
\STATE ~~Step $3,4$ as in \textbf{Algorithm $5$}.
\STATE \textbf{Repeat:}
\STATE ~~Set the optimized energy level variation design as $\bold{v}^*\left( t \right) = \{0,\cdots,0\}$,
\STATE ~~Step $6,9,10$ as in \textbf{Algorithm $5$} to solve Problem (P$6$).
\STATE \textbf{Until:} $t=T$.
\STATE \textbf{Return:} The optimized joint PS and battery operation design $(\bold{v}^*(t),I^*(t))$ at time slot $t$.
\end{algorithmic}
\end{algorithm}

Note that, Theorem $1$ indicates that the battery is not utilized in the proposed greedy method, such that a lower bound on performance of the proposed harvest-use-store PS relaying strategy is provided. Moreover, the solution in Theorem $1$ for the wireless-powered cooperative relaying is different from the solution for the conventional cooperative relaying~\cite{dis}, where not all relays should use up their available power. In particular, a tradeoff between signal amplification and noise amplification needs to be reached in the amplify-and-forward relaying, because the transmit power not only amplifies the useful signal but also the noise. This tradeoff is realized by controlling the transmit power at the relays in conventional cooperative relaying, thus only the relays with good channel conditions should use up the transmit power. However, this tradeoff is reached by optimizing the PS ratios in the wireless-powered case, because the harvested energy obtained through PS is the only power source for relays. As a result, the reason to store a part of the harvested energy is to improve the performance at subsequent time slots and reach a better overall throughput.
\vspace*{-0.5em}
\subsection{Computational Complexity}

The computational complexity of the proposed two online algorithms and exhaustive searching are compared in this subsection. Note that, all the three algorithms utilize \textbf{Algorithm $2$} to solve the ``embedded problem", thus, their computational complexity can be compared by evaluating the times \textbf{Algorithm $2$} is used. Assuming that the computational complexity of \textbf{Algorithm $2$} is $C_2$, the total number of time slots is $T$, the number of channel states is $n_c$, the number of battery states is $N$. According to the constraints $C2$ and $C3$ in~\eqref{eq:15}, the number of battery operations is also $N$. As a result, the computational complexities of the three algorithms can be seen in Table I. It can be seen that the complexity of exhaustive searching is of order $O((N^2)^T)$, which is because the evaluation of battery operations through $T$ time slots are coupled over time. The complexity of algorithm $5$ is of order $O(N^2)$ because dynamic programming adopted in this algorithm is a time decoupled procedure. Besides, $n_c$ is due to channel state information prediction through Markov chain model, which makes dynamic programming feasible. Moreover, the complexity of \textbf{Algorithm $6$} is of order $O((N^2)^0)$, which is because the optimized battery operation in greedy algorithm is set to be $0$ for each relay at each time slot according to Theorem $1$. In conclusion, \textbf{Algorithm $5$} enjoys a lower computational complexity than exhaustive searching, and \textbf{Algorithm $6$} is of lower computational complexity than \textbf{Algorithm $5$}.

\begin{table}
\caption{Computational Complexity Comparison} \center
\begin{tabular}{c c c c}
\hline
 & Computational Complexity\\
\hline
Exhaustive searching & $O(C_2({N}^2)^T)$\\
Algorithm 5 & $O(n_cC_2T(N^2))$\\
Algorithm 6 & $O(C_2T)$\\\hline
\end{tabular}
\end{table}

\section{Numerical Results}

In this section, the proposed harvest-use-store PS relaying strategy is evaluated with numerical results. The two-hop channels are modeled by ${{\vert\rho_{Sk}\vert}^2}d_{Sk}^{-2}K$ and ${{\vert\rho_{kD}\vert}^2}d_{kD}^{-2}$ respectively, where $k$ is the index for relays. To simplify simulations, the distance between the source node and each relay is normalized as $d_{Sk} = 1$m, while the distance between each relay and the destination node is $d_{kD} = 5$m, a similar simulation scenario can be seen in~[9]. Besides, ${\vert\rho_i\vert}$ denotes the short-term channel fading, and is assumed to be Rayleigh distributed. ${{\vert\rho_i\vert}^2}$ follows the exponential distribution with unit mean. The noise powers at each relay and the destination are set as $\sigma_b^2 = \sigma_D^2 = \sigma = 1$. The average signal-to-noise ratio~(SNR) is defined as $\frac {P}{\sigma}$, where $P$ is the maximum transmit power at the source node. Besides, the energy conversion efficiency is set as $\eta_1 = 0.4$, the storage efficiency is $\eta_2 = 0.8$, and the coefficient for the battery size is $\alpha = 1$. For the purpose of presentation, the proposed three joint PS and battery operation designs are termed as the optimal design, the general design, and the greedy design, respectively.
\vspace*{-0.5em}
\subsection{Convergence of the Proposed Designs}

Convergence behavior of the adopted iterative algorithms is depicted in Fig. \ref{conv_1}, where the ``$a-b$ iterations" in this figure denotes $a$ iterations in the alternating optimization and $b$ iterations in the Dinkelbach algorithm. Note that, these two iterations in \textbf{Algorithm $1$} and \textbf{Algorithm $2$} are adopted in all three proposed designs, thus only the convergence behavior of the optimal design is presented due to limited space. We set the number of relays as $K = 2$, the number of energy levels as $L + 1 = 10$, the total number of time slots as $T = 5$, and the number of Markov states as $m = 3$. It's revealed that the alternating optimization algorithm converges within $5$ iterations, and so is the Dinkelbach algorithm.
\begin{figure}[!htb]
\centering\vspace*{-1em}
\includegraphics[width = 3.4 in]{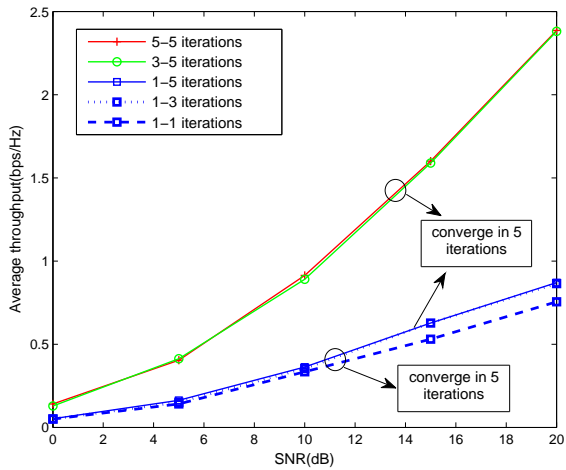}
\caption{Convergence of the proposed optimal design (\emph{K} = $2$, \emph{T} = $5$, and $L = 9$).}\label{conv_1}
\end{figure}
\vspace*{-0.5em}
\subsection{Performance Comparison of the Proposed Designs}

To demonstrate the performance comparison of the three proposed designs, Fig. \ref{comp_pro} is provided as below. This figure demonstrates the average throughput performance of different joint PS and battery operation designs versus the SNR, where the number of relays is $K = 2$, the number of energy levels is $L + 1 = 5$, the total number of time slots is $T = 10$, and the number of Markov states is $m = 3$. The exhaustive searching will determine the optimal joint PS and battery operation design with non-causal CSI numerically. First, the performance of the proposed optimal design is indistinguishable from the optimal one derived from exhaustive searching, which provides an upper bound on performance for the proposed strategy. Next, the performance of the proposed greedy design provides a lower bound on performance for the proposed strategy which does not need a battery. Moreover, the results reveal that by utilizing the non-causal CSI, or to a lesser extent the statistical information of CSI, the throughput performance of the proposed strategy can be improved.
\begin{figure}[!htb]
\centering\vspace*{-1em}
\includegraphics[width = 3.4 in]{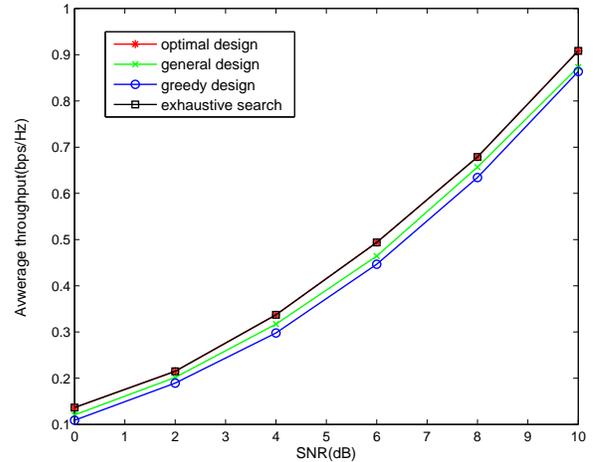}
\caption{Average throughput performance with different joint PS and battery operation designs (\emph{K} = $2$, \emph{T} = $10$, $m = 3$ and $L = 4$).}\label{comp_pro}
\end{figure}
\vspace*{-0.5em}
\subsection{Performance Comparison with Conventional Strategies}

To reveal the advantages of using distributed beamforming in the proposed strategy, performance comparison using different precoding techniques are depicted in Fig. \ref{pre} in terms of system throughput. We set the number of relays to be $K = 3$, the number of energy levels to be $L + 1 = 5$, and the total number of time slots to be $T = 5$. It can be seen that the proposed designs with distributed beamforming outperform the optimized joint PS and battery operation designs with both the best relay selection and random relay selection.
\begin{figure}[!htb]
\centering\vspace*{-1em}
\includegraphics[width = 3.4 in]{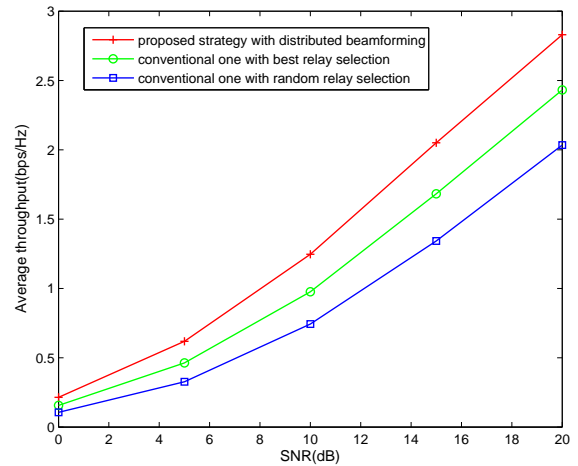}
\caption{Average throughput performance with different precoding techniques, with \emph{K} = $3$, \emph{T} = $5$ and $L = 4$.}\label{pre}
\end{figure}

To reveal the advantages of using PS receiving architecture in the proposed strategy, performance comparisons using different energy harvesting receiving architectures are illustrated in Fig. \ref{comp_eh} and Fig. \ref{delay}. Specifically, the average throughput performance versus SNR has been depicted in Fig. \ref{comp_eh}, where the number of relays is $K = 2$, the number of energy levels is $L + 1 = 5$, and the number of time slots is $T = 10$. It is revealed that our proposed strategy outperforms time switching-based relaying in terms of system throughput.

\begin{figure}[!htb]
\centering\vspace*{-1em}
\includegraphics[width = 3.4 in]{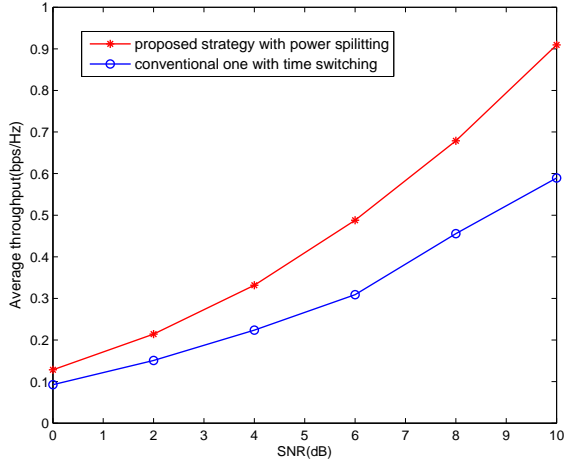}
\caption{Average throughput performance with different energy harvesting receiving architectures (\emph{K} = $2$, \emph{T} = $10$, and $L = 4$).}\label{comp_eh}
\end{figure}

Moreover, performance comparison in terms of delay constraint services is demonstrated in Fig. 6, where the number of relays is $K = 2$, the number of energy levels is $L + 1 = 5$, the bandwidth is $1$MHz, and the average SNR is $10dB$. In this figure, the minimum required slots to finish data transmission of a delay constraint service is depicted. It's revealed that, the proposed strategy with PS consumes fewer time slots compared with the time switching-based relaying, which indicates that the proposed strategy is more suitable for applications with critical delay constraints.

\begin{figure}[!htb]
\centering\vspace*{-1em}
\includegraphics[width = 3.4 in]{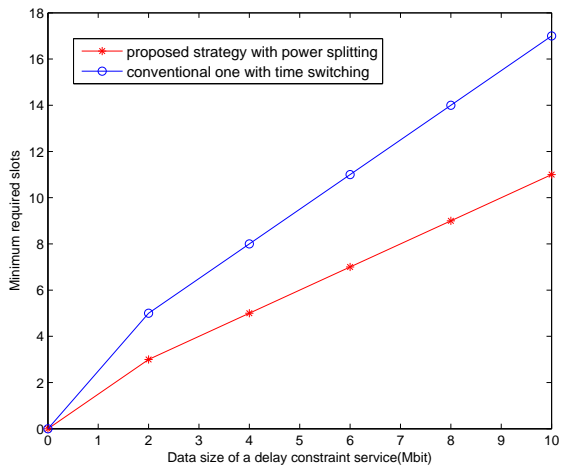}
\caption{Minimum required time slots versus data size of a delay constraint services (\emph{K} = $2$, $L = 4$).}\label{delay}
\end{figure}

To show the advantages of using harvest-use-store model in the proposed strategy, performance comparison using different power management models are illustrated in Fig. \ref{comp_pm} in terms of system throughput, where the number of relays is $K = 2$, the number of energy levels is $L + 1 = 10$, and the number of time slots is $T = 5$. Note that, the proposed strategy outperforms the one with harvest-store-use model because the proposed strategy avoids unnecessary storage loss at relay batteries~\cite{sto}. Moreover, the proposed strategy outperforms the one with harvest-use model because the harvested energy can be accumulated for future usage, which realizes a more efficient utilization of harvested energy~\cite{swipt5}.

\begin{figure}[!htb]
\centering\vspace*{-1em}
\includegraphics[width = 3.4 in]{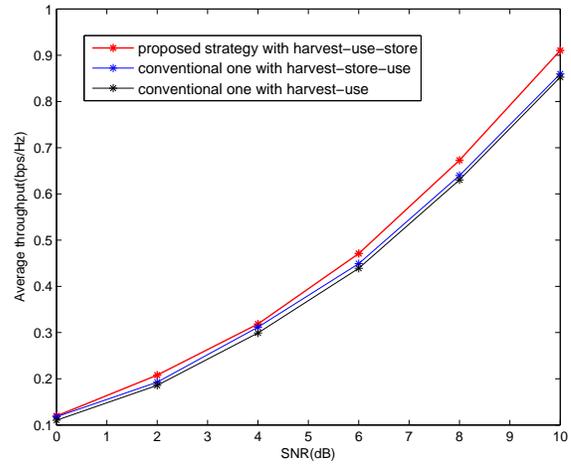}
\caption{Average throughput performance with different power management models (\emph{K} = $2$, \emph{T} = $5$, and $L = 9$).}\label{comp_pm}
\end{figure}

\section{Conclusion}

In this paper, to support an efficient utilization of harvested energy to improve throughput for wireless-powered multi-relay cooperative networks, a harvest-use-store power splitting (PS) relaying strategy with distributed beamforming has been researched. Since the formulated throughput maximization problem is intractable, a layered optimization method has been proposed to decompose the joint PS and battery operation design in two layers, which transforms the intractable optimization problem into a dynamic programming problem with a subproblem requiring optimization embedded in it. The layered optimization method has been implemented in the non-causal channel state information~(CSI) case, which leads to a theoretical bound of the proposed strategy, and extended to the general causal CSI case. To achieve a tradeoff between performance and complexity, a greedy method has been proposed to optimize the joint PS and battery operation design with causal CSI. Simulation results have shown that the proposed harvest-use-store PS-based relaying strategy outperforms time switching-based relaying strategy and conventional PS-based relaying strategy without energy accumulation. {\color{red}This work will be extended to full-duplex relay mode in our future works.}

\begin{appendices}
\vspace*{-0.5em}
\section{PROOF OF LEMMA 2}
\vspace*{-0.25em}
From~\eqref{eq:45}, the second-order derivative of the objective function in Problem (P$4$) is derived as
\begin{equation}\label{eq:a1}
\begin{array}{*{20}{l}}
\begin{array}{l}
\frac{{{\partial ^2}{J_2}}}{{\partial {x_j}^2}} =  - 2\sqrt {{\eta _1}{{\left| {{g_j}} \right|}^2}} \frac{{{a_j}\sigma _b^2}}{{{x_j}^3}}\left[ {\sqrt {{\eta _1}{{\left| {{g_j}} \right|}^2}} \left( {q + 1} \right)} \right.\\
\quad \quad \quad \quad \quad \quad \quad \quad \left. { + \frac{{\sum\limits_{k \ne j} {\sqrt {{\eta _1}{{\left| {{g_k}} \right|}^2}\left( {{a_k} - {x_k}} \right)\left( {1 - \frac{{\sigma _b^2}}{{{x_k}}}} \right)} } }}{{\sqrt {\left( {{a_j} - {x_j}} \right)\left( {1 - \frac{{\sigma _b^2}}{{{x_j}}}} \right)} }}} \right]
\end{array}\\
{ - \frac{1}{2}\sqrt {{\eta _1}{{\left| {{g_j}} \right|}^2}} {{\left( {\frac{{{a_j}\sigma _b^2}}{{{x_j}^2}} - 1} \right)}^2}\frac{{\sum\limits_{k \ne j} {\sqrt {{\eta _1}{{\left| {{g_k}} \right|}^2}\left( {{a_k} - {x_k}} \right)\left( {1 - \frac{{\sigma _b^2}}{{{x_k}}}} \right)} } }}{{{{\left( {{a_j} - {x_j}} \right)}^{\frac{3}{2}}}{{\left( {1 - \frac{{\sigma _b^2}}{{{x_j}}}} \right)}^{\frac{3}{2}}}}} \le 0.}
\end{array}
\end{equation}
Since $\frac{{{\partial ^2}{(F_1(x_j)-qF_2(x_j))}}}{{\partial {x_j}^2}}$ is non-positive, the optimization objective in Problem (P$4$) is a concave function in terms of $x_j$. This completes the proof of Lemma~$2$.

\vspace*{-0.5em}
\section{PROOF OF LEMMA 4}
\vspace*{-0.25em}
At each iteration of \textbf{Algorithm $2$}, one of the $x_j,\forall j$ is updated, whose value is derived by using \textbf{Algorithm $1$}. Note that, the optimal solution to Problem (P$3$) is derived from \textbf{Algorithm $1$}, thus when updating $x_j$, the following relationship is derived that
\begin{equation}\label{eq:33add}
\begin{array}{l}
\mathop {\max }\limits_{ {x_j} } \frac{F_1(x_j)}{F_2(x_j)} = \frac{F_1(x'_j)}{F_2(x'_j)}=J(x_1,\cdots,x_j',\cdots,x_K) \\
> J(x_1,\cdots,x_j,\cdots,x_K),\; \forall j
\end{array}
\end{equation}
which implies that the optimization objective $J$ in Problem (P$2$) increases after each iteration. Moreover, there exist an upper bound for $J$ that
\begin{equation}\label{eq:a3}
J_{upper} = \frac{{{{\left( {\sum\limits_{k = 1}^K {\sqrt {{\eta _1}{{\left| {{g_k}} \right|}^2}\left( {{a_k} - \sigma _b^2} \right)\left( {1 - \frac{{\sigma _b^2}}{{{b_k}}}} \right)} } } \right)}^2}}}{{\sum\limits_{k = 1}^K {{\eta _1}{{\left| {{g_k}} \right|}^2}\left( {{a_k} - \sigma _b^2} \right)\frac{{\sigma _b^2}}{{{b_k}}}}  + \sigma _D^2}},
\end{equation}
where ${b_k} = P{\left| {{h_k}\left( t \right)} \right|^2} + \sigma _b^2$. This upper bound is formulated by substituting $\lambda_{k,I} = 1$ and $\lambda_{k,F} + \lambda_{k,B}= 1$ into~\eqref{eq:9} and simplifying~\eqref{eq:9} in a manner similar to the formulation of~\eqref{eq:43}, which is the theoretical ideal solution but may not be practical as explained in~\cite{prac}. The practical restriction $\lambda_{k,I} + \lambda_{k,F} + \lambda_{k,B}= 1$ is adopted in our paper, and thus the optimized performance cannot outperform the upper bound. In conclusion, after each iteration in\textbf{ Algorithm $2$}, the optimization objective in Problem (P$2$) is increased and finally approaches the upper bound. Thus, \textbf{Algorithm $2$} converges.

\end{appendices}

\begin{IEEEbiography}[{\includegraphics[width=1in]{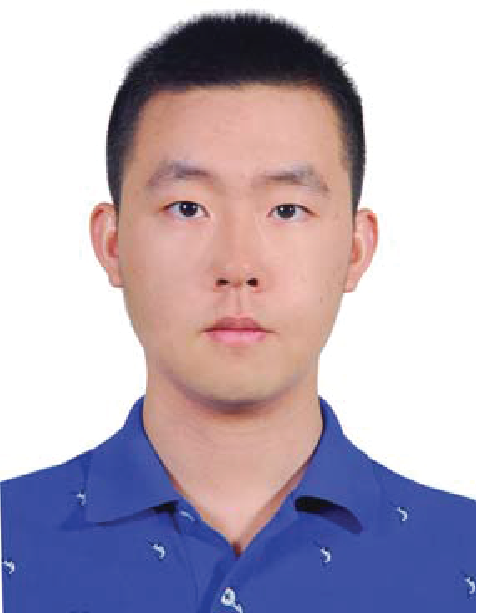}}]{Zheng Zhou}
received the B.S. degree in Information Engineering from Beijing University of
Posts and Telecommunications (BUPT), Beijing,
China, in 2012. He is currently working toward the
Ph.D. degree at BUPT. His research interests include
simultaneous information and power transfer and cloud radio access
networks.
\end{IEEEbiography}

\begin{IEEEbiography}[{\includegraphics[height=1.25in]{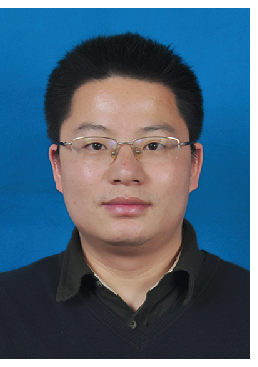}}]{Mugen Peng}
(M'05--SM'11) received the B.E. degree in Electronics Engineering
from Nanjing University of Posts \& Telecommunications, China in
2000 and a PhD degree in Communication and Information System from
the Beijing University of Posts \& Telecommunications (BUPT), China
in 2005. After the PhD graduation, he joined in BUPT, and has become
a full professor with the school of information and communication
engineering in BUPT since Oct. 2012. During 2014, he is also an
academic visiting fellow in Princeton University, USA. He is leading
a research group focusing on wireless transmission and networking
technologies in the Key Laboratory of Universal Wireless
Communications (Ministry of Education) at BUPT, China. His main
research areas include wireless communication theory, radio signal
processing and convex optimizations, with particular interests in
cooperative communication, radio network coding, self-organization
networking, heterogeneous networking, and cloud communication. He
has authored/coauthored over 40 refereed IEEE journal papers and
over 200 conference proceeding papers.

Dr. Peng is currently on the Editorial/Associate Editorial Board of
\emph{IEEE Communications Magazine}, \emph{IEEE Access},
\emph{International Journal of Antennas and Propagation (IJAP)},
and \emph{China Communications}. He has been the guest leading editor
for the special issues in \emph{IEEE Wireless Communications},
\emph{IJAP} and \emph{International Journal of Distributed Sensor
Net- works (IJDSN)}. He received the 2014 IEEE ComSoc AP Outstanding
Young Researcher Award, and the Best Paper Award in IEEE WCNC 2015, WASA 2015, GameNets 2014,
IEEE CIT 2014, ICCTA 2011, IC-BNMT 2010, and IET CCWMC 2009. He was
awarded the First Grade Award of Technological Invention Award in
Ministry of Education of China, and the Second Grade Award of
Scientific \& Technical Progress from China Institute of
Communications.
\end{IEEEbiography}

\begin{IEEEbiography}[{\includegraphics[width=1in]{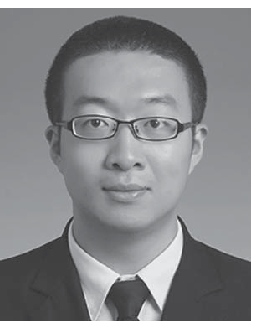}}]{Zhongyuan Zhao}
received the B.S. degree in applied
mathematics and the Ph.D. degree in communication
and information systems from Beijing University
of Posts and Telecommunications (BUPT), Beijing,
China, in 2009 and 2014, respectively. He is
currently a Lecturer with the Key Laboratory of Universal
Wireless Communication (Ministry of Education)
at BUPT. His research interests include network
coding, MIMO, relay transmissions, and large-scale
cooperation in future communication networks.
\end{IEEEbiography}

\begin{IEEEbiography}[{\includegraphics[width=1in]{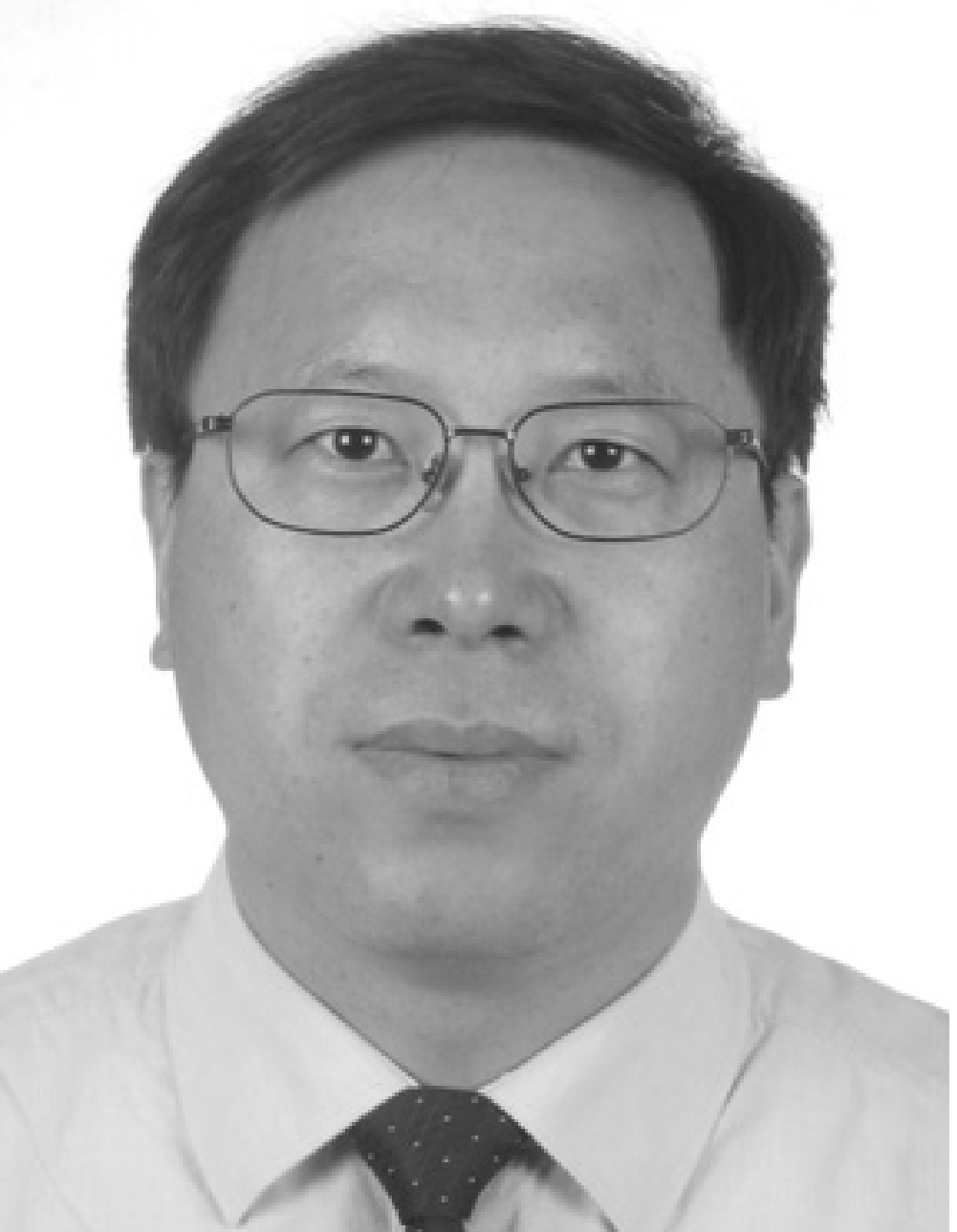}}]{Wenbo Wang}
is currently the dean of Telecommunication Engineering at Beijing
University of Posts and Telecommunications (BUPT). He received the
BS degree, the MS and Ph.D. Degrees from BUPT in 1986, 1989 and 1992
respectively. Now he is the Assistant Director of academic committee
of Key Laboratory of Universal Wireless Communication (Ministry of
Education) in BUPT. His research interests include radio
transmission technology, Wireless network theory, Broadband wireless
access and Software radio technology. Prof. Wenbo Wang has published
more than 200 journal and international conference papers and holds
12 patents and has published six books.
\end{IEEEbiography}

\begin{IEEEbiography}[{\includegraphics[width=1in]{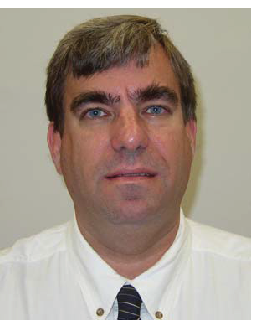}}]{Rick S. Blum}
(S'83--M'84--SM'94--F'05) received
the B.S. degree in electrical engineering from the
Pennsylvania State University in 1984 and the M.S.
and Ph.D. degrees in electrical engineering from
the University of Pennsylvania in 1987 and 1991,
respectively.

From 1984 to 1991, he was a member of technical
staff at General Electric Aerospace, ValleyForge, PA,
USA, and he graduated from GE¡¯s Advanced Course
in Engineering. Since 1991, he has been with the
Electrical and Computer Engineering Department at
Lehigh University, Bethlehem, PA, USA, where he is currently a Professor and
holds the Robert W. Wieseman Chaired Research Professorship in Electrical
Engineering. His research interests include signal processing for smart grid,
communications, sensor networking, radar and sensor processing. He is on
the editorial board for the \emph{Journal of Advances in Information Fusion} of the
International Society of Information Fusion. He was an Associate Editor for
\emph{IEEE TRANSACTIONS ON SIGNAL PROCESSING} and for \emph{IEEE COMMUNICATIONS
LETTERS}. He has edited special issues for \emph{IEEE TRANSACTIONS
ON SIGNAL PROCESSING}, \emph{IEEE JOURNAL OF SELECTED TOPICS IN SIGNAL
PROCESSING}, and \emph{IEEE JOURNAL ON SELECTED AREAS IN COMMUNICATIONS}.
He is a member of the SAM Technical Committee (TC) of the IEEE
Signal Processing Society. He was a member of the Signal Processing for
Communications TC of the IEEE Signal Processing Society and is a member of
the Communications Theory TC of the IEEE Communication Society. He was
on the awards committee of the IEEE Communication Society.

Dr. Blum is a former IEEE Signal Processing Society Distinguished Lecturer,
an IEEE Third Millennium Medal winner, a member of Eta Kappa
Nu and Sigma Xi, and holds several patents. He was awarded an ONR
Young Investigator Award and an NSF Research Initiation Award. His IEEE
Fellow Citation ``for scientific contributions to detection, data fusion and signal
processing with multiple sensors" acknowledges contributions to the field of
sensor networking.
\end{IEEEbiography}

\end{document}